\documentclass[a4paper,UKenglish,cleveref, autoref, thm-restate]{lipics-v2021}
 
\pdfoutput=1
\hideLIPIcs
 
\usepackage{graphicx}
\usepackage{textcomp}
\usepackage{xcolor}
\usepackage{url}
\usepackage[ruled,vlined,linesnumbered]{algorithm2e}
\DontPrintSemicolon
\SetKw{Return}{return}
\SetKw{Break}{break}
\usepackage{etoolbox}
\usepackage{cleveref}
\usepackage{multirow}
\usepackage{graphicx}
\usepackage{siunitx}
\usepackage{bm}
\usepackage{svg}
\definecolor{darkgray}{gray}{0.4}

\SetCommentSty{graycommentstyle}
\usepackage{stmaryrd}
\newcommand\Tau{\mathcal{T}}
\usepackage{subcaption}
\usepackage{eurosym}
\newtheorem{property}{\textbf{Property}}
 
\newcommand{\cbribe}{c_{\mathrm{bribe}}}      
\newcommand{\Rmev}[1]{\mathcal{R}(#1)}        
\newcommand{\Rpub}[1]{\mathcal{R}_{\mathrm{pub}}(#1)} 
\newcommand{\Bbase}{\mathcal{C}^{\mathrm{Eth}}} 
\newcommand{\Glimit}{g_{\max}}                 
\newcommand{\ftipgas}[1]{f^{#1}_{\mathrm{tip,\,gas}}}
\newcommand{\ftiptx}[1]{f^{#1}_{\mathrm{tip,\,tx}}}
\newcommand{\fbasegas}[1]{f^{#1}_{\mathrm{base,\,gas}}}
\newcommand{\fbasetx}[1]{f^{#1}_{\mathrm{base,\,tx}}}
\newcommand{\pool}{\mathrm{Pool}}
\newcommand{\extpool}{\mathrm{ExtendedPool}}
 
\title{Accountable Transaction Inclusion Lists:\, Enhancing Ethereum's
Censorship Resistance}
 
\titlerunning{Accountable Transaction Inclusion Lists} 
 
\author{Patrick Spiesberger}{Karlsruhe Institute of Technology, Germany}{patrick.spiesberger@kit.edu}{https://orcid.org/0009-0006-4746-8868}{}
\author{Hannes Hartenstein}{Karlsruhe Institute of Technology, Germany}{hannes.hartenstein@kit.edu}{https://orcid.org/0000-0003-3441-3180}{}
 
\authorrunning{P. Spiesberger and H. Hartenstein}
 
\Copyright{Patrick Spiesberger and Hannes Hartenstein} 
\ccsdesc[500]{Computer systems organization~Reliability}
 
\keywords{Ethereum, Blockchain, Censorship Resistance, Inclusion Lists, Fairness}

\nolinenumbers 
 
\acknowledgements{This work was funded by the \href{https://kastel-labs.de/}{KASTEL Security Research Labs.}}

\begin{document}
 
\maketitle

\begin{abstract}
In Ethereum, transaction inclusion is rarely in question; what matters is the delay until inclusion.
Currently, block builders could exercise censorship across consecutive blocks, threatening time-critical applications, such as on-chain auctions.
To mitigate this risk, existing proposals such as \textsc{FOCIL}, scheduled for deployment in late 2026, assign a committee to list transactions for mandatory inclusion.
However, no committee member is held accountable for the actual inclusion of the transactions: an adversary can bribe the entire committee to omit any transaction for less than $2~\text{\euro}$ per block under current conditions.
We argue that accountability, i.e., requiring all exclusion decisions to be publicly disclosed and verifiably complete, with violations attributable to a specific party, substantially raises censorship costs.
To this end, we propose \textsc{Fair Forward Inclusion Lists} (\textsc{FairFIL}) as an accountable censorship resistance mechanism for Ethereum.
In \textsc{FairFIL}, every builder must publish all transactions the builder chooses to censor, subject to a protocol-anchored policy; a committee verifies the completeness and validity of this disclosure.
The subsequent builder must include these transactions, forfeiting the full block reward upon any omission.
Therefore, under \textsc{FairFIL}, extending censorship beyond a single slot requires an assembler to forfeit a full block reward.
We show that compliance is rational for all participants within our behavior model.
Our empirical evaluation on Ethereum mainnet indicates that multi-block censorship costs one order of magnitude more than under existing proposals, while leaving the builder's MEV extraction freedom largely intact.
Initial measurements further suggest that the mempool consistency \textsc{FairFIL} requires is met in practice.
\end{abstract}

\section{Introduction}
\label{sec:introduction}
Ethereum~\cite{buterin_ethereum_2014}, as of mid 2026, is the largest general-purpose smart-contract platform by market capitalization, hosting a wide range of applications that depend on transaction inclusion with low and predictable latency.
Ethereum operates on a time-slotted model~\cite{ETHspec_consensus}: in each \SI{12}{\second} slot, a single leader (in Ethereum terminology the \emph{proposer}) is selected at random to assemble a block, while the remaining participants (\emph{validators}) attest to its validity.
During the slot, the proposer has short-lived but unilateral control over which transactions are included in the block and in what order~\cite{gramlich_mev_2024,schneider1990}.
Correspondingly, any transaction can be excluded at will by the proposer.
For some transactions, inclusion within a handful of slots is sufficient.
However, a broad class of transactions is time-critical: each additional slot of inclusion latency can diminish the value delivered to the user or application, or more broadly undermine Ethereum's attractiveness as a reliable platform.
Examples include auctions, decentralized finance, and commit-and-reveal protocols. In auctions, a bid submitted shortly before the auction closes may be suppressed after the deadline~\cite{fox_censorship_2023,wu2024strategicbiddingwarsonchain}. In decentralized finance, a stop-loss transaction submitted when a trigger price is reached can be censored; each additional delay allows the asset price to move further against the user's intent~\cite{daian_flash_2020,NADLER2026102908}.
\newpage
\noindent In commit-and-reveal schemes, a reveal transaction must be submitted within a bounded window after committing; if censored, the protocol cannot distinguish the outcome from an intentional non-reveal, and the participant may be penalized despite having acted in time.

For such applications, the essential question is not whether a transaction will eventually be included in the blockchain, but how quickly the transaction affects the application.
We follow the established usage in the Ethereum literature~\cite{buterinProblem2015,wahrstatter_blockchain_2024} and use the term ``censorship'' for additional inclusion latency --~the exclusion of a transaction from one or more consecutive blocks -- rather than for permanent exclusion from the blockchain~\cite{fox_censorship_2023}.
As long as control over inclusion decisions rotates frequently among independent parties, no single party can delay a transaction beyond a limited timeframe.
Concerns arise when one party controls block construction over extended periods, a scenario facilitated by Ethereum's Proposer--Builder Separation (PBS)~\cite{heimbach_ethereums_2023,EIP7732}.
Under PBS, proposers delegate block construction to specialized builders to capture Maximal Extractable Value (MEV), i.e., the additional revenue obtained by strategically including, excluding, or reordering transactions within or across blocks~\cite{gramlich_mev_2024}.
Empirical evidence shows that two builders assembled more than \SI{80}{\percent} of all blocks in 2024 and 2025~\cite{yang_decentralization_2025,bahrani_centralization_2025}, and nearly all proposers rely on them~\cite{ethereum_altruistic_proposers_2026}.
A dominant builder could, in principle, impose censorship over many consecutive slots, thereby undermining the reliability of the network.
Our measurements, as described below, underline the relevance in practice: in September 2025, an average of \num{8.3} transactions per block were censored despite being eligible for inclusion under the intended tip-based reference ordering~\cite{EIP-1559}.

To mitigate transaction censorship, prior work proposes mechanisms that distribute inclusion control across a committee of $\kappa$~participants per slot~\cite{EIP7547,thiery2024eip7805,aucil,garimidi_concurrent_2025}.
The common idea is to split one party's exclusive control into shares across $\kappa$ parties, such that a single honest member suffices to force the inclusion of a transaction.
However, as we show later, an adversary can bribe the entire committee of any of these mechanisms at a cost of only a few euros per slot under current conditions.
We attribute this low cost of censorship to the fact that no individual committee member is held accountable: no member's exclusions are publicly attributable, verifiable, or subject to protocol-enforced penalties.
In this paper, we therefore address the following research question:

\vspace{-2mm}
\subparagraph{Research question.}
\emph{Does accountability for exclusion decisions substantially increase the cost of censoring a transaction?}

\vspace{-2mm}
\subparagraph{Contribution and Rationale.}
The idea of accountability in censorship-resistance mechanisms and an initial sketch of \textsc{FairFIL} were first introduced in~\cite{spiesberger_policy-based_2025}. This paper builds on that foundation and makes three main contributions.
First, we formalize accountability for censorship resistance in Ethereum.
We prove that accountability can substantially increase the cost of sustained censorship: under accountable inclusion enforcement, a censoring builder forfeits the entire block reward, increasing censorship costs by roughly one order of magnitude compared to prior approaches.
Second, we present \textsc{Fair Forward Inclusion Lists}~(\textsc{FairFIL}), an accountable censorship resistance mechanism, and prove that under a fully rational participant model, \textsc{FairFIL} enforces block invalidation under sustained censorship.
Third, we implement \textsc{FairFIL} and empirically evaluate the mechanism on Ethereum mainnet, quantify the cost of censorship and provide first evidence for practical deployability.
\textsc{FairFIL} is designed to be explicitly MEV-friendly, preserving most of the builders' flexibility for MEV extraction.
Two considerations motivate this choice of an ``MEV-friendly design''.
First, MEV constitutes a substantial share of proposer revenue~\cite{MEV_Attacks}. 
Curtailing MEV-extraction freedom risks undermining Ethereum's long-term viability. 
Though difficult to quantify, we regard this risk as significant, as does prior work~\cite{mev_sharing}.
Second, over \SI{90}{\percent} of validators~\cite{heimbach_ethereums_2023,ethereum_altruistic_proposers_2026} voluntarily adopted MEV-Boost~\cite{flashbots_mev-boost}, 
despite censorship and centralization risks~\cite{heimbach_ethereums_2023,yang_decentralization_2025}, demonstrating that validators will deviate from protocol intentions when compliance reduces revenue.
\textsc{FairFIL} trades only a small share of the block assembler's MEV-extraction freedom in exchange for inclusion guarantees: any transaction may be censored for one slot but must be included thereafter, keeping assembler constraints low while guaranteeing timely inclusion for the transaction sender.

\vspace{-2mm}
\subparagraph{Paper structure.}
\Cref{sec:background} introduces the system model, defines transaction censorship, and states the participant and adversary model.
\Cref{sec:relatedwork} surveys prior censorship resistance mechanisms and identifies the absence of accountability as their main limitation.
\Cref{sec:fairfil} presents the \textsc{FairFIL} protocol.
\Cref{sec:properties} empirically evaluates \textsc{FairFIL} on Ethereum mainnet, formalizes accountability as four properties satisfied by \textsc{FairFIL}, and quantifies the per-slot cost of sustained censorship compared to prior mechanisms.
\Cref{sec:discussion} discusses assumptions and deployment considerations regarding MEV; \Cref{sec:conclusion} concludes.

\section{Foundations and Design Space}
\label{sec:background}

\subsection{System Model and Transaction Censorship}
\label{sec:model}
 
We consider an abstracted Ethereum model in which the blockchain is extended in discrete time slots.
For each slot~$s$, a committee $V_s \subset V$ of validators is pseudo-randomly selected from the active validator set~$V$, while a single proposer $v_s^{P} \in V_s$ is responsible for assembling the block at the beginning of slot~$s$~\cite{ETHspec_consensus}.
We refer to the party that actually constructs the block -- whether the proposer itself or a delegated builder under PBS~\cite{EIP7732} -- as the \emph{block assembler}~$\beta_s$.
\Cref{tab:notation} summarizes the symbols and abbreviations used throughout this paper.

\vspace{-2mm}
\subparagraph{Transactions.}
A transaction~$\tau$ is a signed user statement~\cite{ETHspec_consensus} specifying an intent and an execution cost $\mathrm{gas}(\tau)$ determined upon execution.
For every unit of gas consumed, the user pays a voluntary per-gas tip~$\ftipgas{\tau}$ -- intended to incentivize faster inclusion -- to the block assembler~$\beta_s$, and a mandatory per-gas base fee~$\fbasegas{\tau}$ that is received by no party~\cite{EIP-1559}.
Accordingly, the total per-transaction tip paid to block assembler~$\beta_s$ for including~$\tau$ is $\ftiptx{\tau} := \ftipgas{\tau} \cdot \mathrm{gas}(\tau)$, and the total per-transaction base fee is $\fbasetx{\tau} := \fbasegas{\tau} \cdot \mathrm{gas}(\tau)$.

\vspace{-2mm}
\subparagraph{Mempool.}
Transactions are disseminated through Ethereum's peer-to-peer~(P2P) layer and stored, at each node, in a local buffer called the mempool.
We write $\pool_s^{n}$ for the mempool view of a node~$n$ at slot~$s$; in particular, $\pool_s^{\beta_s}$ is the block assembler's view.
For the theoretical analysis, we assume all validators share an identical mempool view, consistent with prior analyses of censorship resistance mechanism designs~\cite{aucil}.
In practice, mempools are not perfectly consistent; we revisit this assumption empirically in \Cref{sec:empirical:mempool} and find that the near-perfect consistency observed appears to be sufficient for \textsc{FairFIL} operation.

\vspace{-2mm}
\subparagraph{Block construction and validity.}
At the beginning of slot~$s$, the block assembler selects a set of transactions from $\pool_s^{\beta_s}$, possibly augmented with transactions received via a private communication channel (Exclusive Order Flow, XOF)~\cite{gramlich_mev_2024}.
Subsequently, the assembler orders and executes these transactions and publishes the resulting block~$B_s$.
A block is protocol-valid if the execution on top of the state induced by block $B_{s-1}$ respects Ethereum's execution rules~\cite{wood_ethereum_2025}; in particular, the cumulative gas consumed by $B_s$ must not exceed the per-slot gas limit~$\Glimit(B_s)$.
A block is accepted once a majority of $V_s$ attests to its validity; due to Ethereum's proposer boost, slightly less than a majority suffices in practice~\cite{schwarzschilling_proposer_boost}.

\vspace{-2mm}
\subparagraph{Transaction Censorship.}
\label{sec:defcensorship}
In this work, we restrict the notion of censorship to a specific operational instance: the exclusion of a transaction from the block currently being assembled.
Censorship in this sense denotes additional inclusion latency, not permanent exclusion from the blockchain.
We neither consider self-censorship (users refraining from submitting transactions) nor censorship at Ethereum's P2P layer.
To decide which transactions should have been included, we adopt a deterministic reference ordering.
Transactions in the mempool $\pool_s^{\beta_s}$ are sorted in descending order of their per-gas tip $\ftipgas{\tau}$, with the transaction hash as a tie-breaker.
Iterating through this sequence, we admit every transaction whose execution preserves block validity; we call the resulting sequence the \emph{Fair Ordering}\footnote{This use of ``fair'' differs from receive-time-based notions of order fairness such as the Aequitas family~\cite{kelkar_orderfairness_2020}.} of the mempool.

\begin{definition}[Transaction censorship]
\label{def:censorship}
A transaction $\tau$ is \emph{censored} from block $B_s$ if $\tau$ was propagated before the start of slot~$s$ via Ethereum's P2P layer and its inclusion in $B_s$ under the Fair Ordering would preserve block validity, yet transaction $\tau$ is excluded from block $B_s$.
\end{definition}

The condition ``propagated before the start of slot~$s$'' is not precisely observable in a decentralized system. 
In practice, we operationalize this condition as the initial propagation at least \SI{3}{\second} before the start of the slot to account for network propagation delays.
\Cref{tab:censorship-example} illustrates \Cref{def:censorship} on a mempool $\pool_s^{\beta_s} = \{\tau_1,\dots,\tau_5\}$, augmented with a privately submitted transaction $\tau_{\mathrm{private}}$, against a block gas limit of $\Glimit = 100$ gas.
The candidate sequence obtained by sorting $\pool_s$ by per-gas tip in descending order is $[\tau_1, \tau_2, \tau_3, \tau_4, \tau_5]$.
Iterating through this sequence and admitting each transaction whose execution still fits into $\Glimit$ yields the Fair Ordering $[\tau_1, \tau_2, \tau_4, \tau_5]$ with cumulative gas~\num{50}.
Transaction $\tau_3$ is dropped because including $\{\tau_1, \tau_2, \tau_3\}$ would exceed $\Glimit$.
By \Cref{def:censorship}, any transaction in the \textit{Fair Ordering} not included in $B_s$ is censored.
In our example, this applies to $\tau_2$, as well as to $\tau_4$, which is displaced by the privately submitted $\tau_{\mathrm{private}}$.

\begin{table}[!ht]
\centering
\begin{tabular}{c|c|c|c|c|c}
$\tau$ & Tip $\ftipgas{\tau}$ & $\mathrm{gas}(\tau)$ & Includable? & $\in B_s$? & Censored? \\
\hline
$\tau_5$ & 2 & 10 & Yes & $\checkmark$ & No \\
$\tau_2$ & 8 & 10 & Yes & $\times$ & \textbf{Yes} \\
$\tau_3$ & 7 & 90 & No & $\times$ & No \\
$\tau_{\mathrm{private}}$ & 20 & 70 & Yes & $\checkmark$ & -- \\
$\tau_4$ & 5 & 10 & Yes & $\times$ & \textbf{Yes} \\
$\tau_1$ & 10 & 20 & Yes & $\checkmark$ & No \\
\end{tabular}
\vspace{1mm}
\caption{Set of transactions available to the current block assembler for block construction ($\Glimit = 100$), resulting in block $B_s = [\tau_5, \tau_{\mathrm{private}}, \tau_1]$.  
While this selection may appear arbitrary, it reflects the assembler's MEV extraction strategy, which encompasses both the censorship of transactions and the inclusion of private transactions, with the objective of maximizing revenue.
}
\label{tab:censorship-example}
\end{table}

\vspace{-4mm}
\noindent We further classify censored transactions by how long their inclusion is delayed.
 
\begin{definition}[$n$-slot censorship]
\label{def:nslotcensorship}
Let $s$ be the earliest slot in which transaction $\tau$ is includable.
For $n \in \mathbb{N}_+$, $\tau$ experiences \emph{$n$-slot censorship} if it is censored from blocks $B_s, \ldots, B_{s+n-1}$ and included in block $B_{s+n}$.
The case $n = 0$ corresponds to the absence of censorship for $\tau$.
\end{definition}

\subsection{Behavior Model and Censorship Resistance Mechanisms}
\label{sec:cr_framework}
We apply the Byzantine, Altruistic, Rational (BAR) model by Aiyer et al.~\cite{aiyer_bar_2005} to characterize validator and block assembler behavior, following prior work such as~\cite{ethereum_altruistic_proposers_2026}.
An \textit{altruistic} participant follows the protocol's intent regardless of payoff.
We assume their absence, as protocols should be robust even in the worst plausible case, and revisit this assumption in \Cref{discussion:altruism}.
In what follows, all validators and block assemblers are modeled as \textit{rational participants}: a rational participant deviates from the protocol whenever deviation yields a strictly higher personal payoff than compliance, and may cooperate with other rational participants.
Such a deviation need not violate the protocol; it may instead exploit degrees of freedom the protocol permits, contrary to its intent.
We treat censorship as a deviation from the Ethereum protocol~\cite{ethereum-vision} and later extend this to deviations from censorship-resistance protocols.
When compliance and deviation yield identical payoffs, we assume compliance: deviating offers no financial gain, while `unfair' behavior could undermine Ethereum's long-term viability and thereby reduce future validator revenue.
However, we assume that any strictly positive revenue gain suffices to induce deviation.
Adversary~$\mathcal{A}$ is not profit-driven and aims to censor transactions: $\mathcal{A}$ can communicate directly with all participants~\cite{aucil} and may bribe rational participants to deviate, subject to a per-slot bribing budget~$\mathcal{M}$.
All bribes are paid on-chain via a bribing smart contract~\cite{sookitoth_bribers_2025}; out-of-band payments (e.g., bank transfers) are outside our model.
Each bribe therefore incurs a fixed overhead~$\cbribe$ for adversary~$\mathcal{A}$, covering fund transfer and on-chain verification -- in addition to the bribe amount itself.

\subparagraph{Budget restriction.}
In Ethereum, censorship resistance is fundamentally an economic property~\cite{fox_censorship_2023}. 
As long as block assemblers may choose to withhold a block, any sufficiently wealthy adversary can censor transactions by compensating the block assembler~$\beta_s$ for the forgone block reward~$\Rmev{B_s}$, defined as the sum of transaction tips and extracted MEV.
Providing censorship resistance against arbitrarily wealthy adversaries would require substantial structural changes to Ethereum and lies outside the scope of this paper.
Accordingly, we assume that an adversary cannot bribe a block assembler to an extent that prevents the publication of a block.
Under this assumption, the adversary's per-slot budget is bounded by the block reward~$\Rmev{B_s}$, since any larger budget would enable the adversary to prevent block publication in every slot, thereby violating Ethereum's liveness property and halting chain progression.
We further assume that the adversary cannot bribe the slot committee~$V_s$ to accept an invalid block or reject a protocol-valid one.
Such attacks would require bribing a majority of the committee, which comprises approximately \num{30000} validators in 2025~\cite{heimbach_deanonymizing_2025}.
We argue that the required budget significantly exceeds the block reward~$\Rmev{B_s}$.

\vspace{2mm}
\noindent Based on this adversary model, we define censorship resistance as follows:

\begin{definition}[$(\mathcal{M},\sigma)$-censorship resistance]
\label{def:cr}
A protocol is \emph{$(\mathcal{M},\sigma)$-censorship-resistant} if, for any transaction~$\tau$, an adversary with per-slot budget~$\mathcal{M}$ cannot prevent $\tau$ from being included for more than $\sigma$ consecutive slots under rational behavior of all other participants.
\end{definition}
 
We write $\mathcal{C}_{\tau} = [\mathcal{C}_{\tau}^{s}, \mathcal{C}_{\tau}^{s+1}, \dots]$ for the per-slot cost sequence incurred by $\mathcal{A}$ to censor $\tau$, where each entry $\mathcal{C}_{\tau}^{s+i}$ denotes the cost in the respective slot $s+i$.
The summation of costs up to the $n$-th element therefore corresponds to the cost of $n$-slot censorship of $\tau$.

\subparagraph{Ethereum baseline.}
We establish the censorship resistance of plain Ethereum as of early 2026.
Informally, outbidding transaction $\tau$'s per-transaction tip $\ftiptx{\tau}$ by any $\epsilon > 0$ makes accepting the bribe more profitable than including $\tau$, thereby censoring $\tau$ for one slot; the attack can be repeated in each slot.
This yields \Cref{lemma:EthereumCR}; the proof is given in Appendix \ref{appendix:proof_ethereum}.

\newpage
\begin{lemma}[Ethereum baseline]
\label{lemma:EthereumCR}
Ethereum is $\bigl(\ftiptx{\tau} + \cbribe,\; 0\bigr)$-censorship-resistant.
The per-slot cost of excluding transaction $\tau$ is $\Bbase_{\tau} := \ftiptx{\tau} + \cbribe + \epsilon$ for each slot, with $\epsilon > 0$.
\end{lemma}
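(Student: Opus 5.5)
The proof is a direct payoff comparison for the single rational block assembler $\beta_s$ in plain Ethereum, where $\beta_s$ alone decides which transactions enter $B_s$. In the slot committee $V_s$, no member other than $\beta_s$ has any say over inclusion. Moreover, a block that omits $\tau$ is still protocol-valid: \Cref{def:censorship} only fixes a reference ordering for \emph{measuring} censorship and does not impose a validity rule. By our assumption, $V_s$ cannot be bribed to reject valid blocks, so $V_s$ accepts $B_s$ either way. The only party the adversary~$\mathcal{A}$ ever needs to bribe is therefore $\beta_s$. I will establish the two parts of the lemma in order: first the per-slot cost, then the resistance parameters $(\mathcal{M},\sigma)$.

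\textbf{Per-slot cost.} I compare $\beta_s$'s payoff when including $\tau$ with its payoff when excluding $\tau$ and accepting a bribe $b$. Including $\tau$ in place of its best alternative yields at most the additional tip $\ftiptx{\tau}$; the base fee is burned and accrues to nobody. I will argue that it yields exactly $\ftiptx{\tau}$ in the worst case for $\mathcal{A}$, because $\tau$ lies in the Fair Ordering and, by the censorship definition, fits without displacing other revenue. Excluding $\tau$ yields $b$. Under the behavior model, $\beta_s$ deviates exactly when $b > \ftiptx{\tau}$ strictly, and complies on ties. Hence $b = \ftiptx{\tau}+\epsilon$ with any $\epsilon>0$ suffices, and any $b \le \ftiptx{\tau}$ fails. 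Adding the fixed on-chain overhead $\cbribe$ gives the per-slot cost $\Bbase_{\tau} = \ftiptx{\tau}+\cbribe+\epsilon$.

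\textbf{Resistance parameters.} Each slot has a freshly selected assembler, and Ethereum carries no cross-slot state that penalizes an earlier omission. The slot-$(s+1)$ game is therefore identical to the slot-$s$ game. By induction on $i$, $\mathcal{A}$ can censor $\tau$ in every slot $s+i$ at cost $\mathcal{C}_\tau^{s+i}=\Bbase_\tau$. It follows that any budget of at least $\Bbase_\tau$ defeats $\sigma=0$. A budget of exactly $\ftiptx{\tau}+\cbribe$ does not: after paying the overhead, at most $\ftiptx{\tau}$ remains, which by the tie-breaking rule does not induce deviation. So $\tau$ is included in the first slot and $\sigma=0$ holds at $\mathcal{M}=\ftiptx{\tau}+\cbribe$, matching the stated threshold. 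This shows the $\epsilon$ is what separates the two regimes.

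\textbf{Main obstacle.} The delicate step is justifying that $\beta_s$'s marginal loss from omitting $\tau$ is exactly $\ftiptx{\tau}$. Omitting $\tau$ frees gas that could host a lower-tip transaction, which would make the loss smaller; conversely, $\tau$ could carry MEV relevance. I would handle this by taking the worst case for the adversary, namely that freed gas yields no substitute revenue and $\tau$ carries no MEV. This is the setting in which the lemma's cost is tight, and any other case only lowers $\mathcal{A}$'s cost, so the bound stands as stated.
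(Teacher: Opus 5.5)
Your core argument is the paper's: compare $\beta_s$'s tip $\ftiptx{\tau}$ with the net bribe $\ftiptx{\tau}+\epsilon$ left after the overhead $\cbribe$. Then use compliance-on-ties to show that a budget of $\ftiptx{\tau}+\cbribe$ leaves $\beta_s$ indifferent and therefore compliant. Finally, repeat the attack every slot, since plain Ethereum carries no cross-slot obligation.

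The step that fails as written is your ``main obstacle'' reduction. The two halves of the lemma pull in opposite directions. Sufficiency of $\Bbase_{\tau}$ needs $\beta_s$'s marginal loss from omitting $\tau$ to be \emph{at most} $\ftiptx{\tau}$. The $(\ftiptx{\tau}+\cbribe,0)$-resistance half needs that loss to be \emph{at least} $\ftiptx{\tau}$. No one-sided worst case can deliver both.

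Concretely, if freed gas can host a substitute transaction, the loss drops below the tip. A budget of $\ftiptx{\tau}+\cbribe$ then already censors, so the resistance half fails rather than ``stands''. Likewise, assuming $\tau$ carries no MEV is the favourable case for $\mathcal{A}$, not the worst. MEV tied to $\tau$ pushes the loss above the tip, and then $\Bbase_{\tau}$ no longer suffices. So ``any other case only lowers $\mathcal{A}$'s cost'' is false in one direction and insufficient in the other.

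Definition~\ref{def:censorship} also does not give you ``fits without displacing other revenue''. Fair Ordering is computed over the public mempool only. Table~\ref{tab:censorship-example} shows a censored transaction ($\tau_4$) whose inclusion in the actual, full block would require dropping $\tau_{\mathrm{private}}$.

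The paper sidesteps all of this by taking it as part of the model that including $\tau$ earns $\beta_s$ exactly $\ftiptx{\tau}$: no substitute revenue and no MEV. State that as an explicit modelling assumption rather than deriving it from a worst case, and your proof coincides with the paper's.
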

In absolute terms, a budget of $0.10~\text{\euro}$ per slot is sufficient to censor the median transaction, where transactions are ranked by total tips, observed between September and December 2025. 
The required budget is predominantly driven by the bribe overhead~$\cbribe$.

While the per-slot cost of censorship is financially bounded by the block reward, the objective of a censorship resistance mechanism is to maximize the cost an adversary must incur to censor a target transaction.
We consider inclusion list mechanisms that achieve this by designating a set of required transactions~$\Tau_{\mathrm{req}}$ whose inclusion is protocol-enforced.
Accordingly, we regard the transaction set $\Tau_{\mathrm{req}}$ as an obligation imposed on a specified party.

\begin{definition}[Inclusion List-Based Censorship Resistance Mechanism, IL-CR]
\label{def:cr_mechanism}
An inclusion list-based censorship resistance mechanism~$\Pi_n$ designates, for each slot~$s$, a set~$\Tau_{\mathrm{req}}$ of transactions and requires that every transaction in $\Tau_{\mathrm{req}}$ is included no later than block~$B_{s+n}$.
Such a mechanism is called accountable if the construction of $\Tau_{\mathrm{req}}$ and its enforcement are publicly verifiable, so that any deviation can be detected and penalized.
\end{definition}

\noindent We distinguish two classes of IL-CR mechanisms: a \emph{constructing} mechanism assigns construction of $\Tau_{\mathrm{req}}$ to a committee without content constraints, and is therefore non-accountable.
In contrast, a \emph{verifying} mechanism defines a protocol-anchored construction policy against which a committee verifies compliance -- any deviation is detectable, regardless of who performed the construction.
A verifying mechanism is \emph{accountable} if every policy violation can be attributed to a specific party and the protocol specifies a corresponding consequence.

The significance of this distinction follows from the rationality model.
In constructing mechanisms, each committee member independently contributes a partial list -- a subset of transactions selected at the member's discretion -- from which $\Tau_{\mathrm{req}}$ is assembled as the aggregate.
No member faces any penalty for omitting transaction~$\tau$ from their partial list; any bribe exceeding the per-transaction tip therefore suffices to induce omission.
Further, without a policy constraint, members may insert privately negotiated transactions, which can displace legitimate mempool transactions, enable advanced MEV strategies, and undermine external verifiability.
Under the rationality assumption, this is the predicted outcome in practice: an outsourcing of $\Tau_{\mathrm{req}}$ construction to specialized builders, analogous to PBS, thus becomes plausible.
In verifying mechanisms with accountability, by contrast, any deviation exposes the constructing party to a financial penalty independent of transaction fees -- such as forfeiting the full block reward~$\Rmev{B_s}$ -- making omission far more costly than any per-transaction tip.
We demonstrate in this paper that \textsc{FairFIL}, as an accountable verifying mechanism, raises the per-slot cost of sustained censorship by roughly one order of magnitude relative to the constructing mechanisms surveyed in the following \Cref{sec:relatedwork}.

\section{Related Work}
\label{sec:relatedwork}
We consider four prior censorship resistance mechanism proposals: FIL~\cite{EIP7547}, FOCIL~\cite{thiery2024eip7805}, MCP~\cite{garimidi_concurrent_2025}, and AUCIL~\cite{aucil}, where FIL, FOCIL, and AUCIL are inclusion list-based censorship resistance mechanisms. 
We further classify MCP as an IL-CR mechanism despite structural differences, which we discuss in detail below.
The FIL approach represents the most basic instance of an IL-CR mechanism, where $\Tau_{\mathrm{req}}$ is constructed by a single party ($\kappa = 1$). 
In contrast, the other mechanisms distribute the construction of $\Tau_{\mathrm{req}}$ across a committee of size $\kappa > 1$ per slot.
For each mechanism, we describe the design and derive the resulting per-slot censorship cost, as detailed in Appendix~\ref{appendix:proofs_priorcr}. 
To the best of our knowledge, no existing mechanism provides accountability for transaction exclusion. 

\vspace{4mm}
Forward Inclusion Lists (FIL)~\cite{EIP7547} (Neuder et al.) is a $\Pi_1$ mechanism that enforces transaction inclusion in a \emph{forward} manner: the inclusion list ($\Tau_{\mathrm{req}}$) is constructed by the current block assembler and published alongside block $B_s$, while binding obligations apply to the subsequent block assembler $\beta_{s+1}$, who must include all listed transactions in block~$B_{s+1}$.
Since FIL does not constrain the construction of $\Tau_{\mathrm{req}}$, the assembler of $B_s$ retains full freedom over its contents, and omissions are not verifiable. Tsao et al.~\cite{Tsao2023FIL} introduce a first step towards accountability by adding a validity check that ensures listed transactions are includable in the next block, but the construction of obligation $\Tau_{\mathrm{req}}$ remains unconstrained.
Since the block assembler receives no reward tied to any specific transaction in $\Tau_{\mathrm{req}}$, bribing the assembler to omit a transaction from the list incurs only a negligible additional cost. 
The dominant censorship cost is that of excluding transaction~$\tau$ from both block~$B_s$ and $B_{s+1}$ (see \Cref{lemma:EthereumCR}).
A tighter characterization of per-slot censorship costs under FIL is provided in Appendix~\ref{appendix:proof_fil}.
\textsc{FairFIL} strengthens FIL's design by introducing accountability through a protocol-defined and verifiable construction of $\Tau_{\mathrm{req}}$, while leveraging the block reward as an intrinsic enforcement mechanism.

Fork-Choice Enforced Inclusion Lists (FOCIL)~\cite{thiery2024eip7805} (Thiery et al.), scheduled for deployment on Ethereum in 2026~\cite{ethereum_hegota_2026}, distributes the construction of $\Tau_{\mathrm{req}}$ across a committee of $\kappa = 16$ validators per slot.
Each committee member independently selects transactions from its local mempool view and broadcasts a partial inclusion list during slot $s$. 
The block assembler $\beta_{s+1}$ is required to include all transactions appearing in any partial list, subject to block validity and capacity constraints.
In contrast to FIL, transactions propagated within slot~$s$ (in particular up to three seconds before the start of slot~$s{+}1$) are subject to an inclusion obligation for block assembler $\beta_{s+1}$, which classifies FOCIL as a $\Pi_0$ mechanism.
FOCIL relies on a \emph{one-of-$\kappa$-honest} assumption: a single honest committee member suffices to enforce inclusion of a transaction $\tau$ in $\Tau_{\mathrm{req}}$, implying that an adversary must compromise all $\kappa$ members to censor $\tau$.
However, as in FIL, committee members receive no reward for specific transactions they include; any positive bribe is therefore sufficient to induce omission from an individual member's partial list.
Consequently, censorship resistance scales with committee size but remains vulnerable in a purely rational setting. 
We show in Appendix~\ref{appendix:proof_focil} that the resulting per-slot bribery cost across the entire committee is below \num{2}~\euro{} for a median-tip transaction in late~2025 under current Ethereum conditions (see \Cref{sec:empirical_baseline}).

Multiple Concurrent Proposers (MCP)~\cite{garimidi_concurrent_2025} (Garimidi et al.) departs from the inclusion-list paradigm of FIL and FOCIL by introducing $\kappa$ concurrent block assemblers (we assume $\kappa = 16$), each constructing an independent partial block.
These $\kappa$ partial blocks are concatenated into a single block ($\Pi_0$), forming an obligation $\Tau_{\mathrm{req}}$, without the possibility for a single party to include additional transactions afterwards.
As in FOCIL, a transaction is included in $\Tau_{\mathrm{req}}$ if at least one proposer includes it in its partial block, again relying on a one-of-$\kappa$-honest assumption.
In contrast to FIL and FOCIL, proposers receive direct rewards for transactions included in their partial blocks that are ultimately incorporated into the aggregated block.
Each proposer retains at most the per-transaction tip~$\ftiptx{\tau}$ for transactions in its own partial block, not considering MEV extraction.
Consequently, omitting a transaction~$\tau$ induces an opportunity cost equal to the forgone tip.
As a result, the required per-member bribe increases by approximately the transaction tip relative to FOCIL, which is around \num{0.01}~\euro{} at the median-tip transaction in late~2025, and is therefore only marginally higher than in FOCIL.

Auction-Based Inclusion Lists (AUCIL)~\cite{aucil} (Wadhwa et al.) is a $\Pi_0$ mechanism that combines committee-based construction with an explicit auction for aggregating partial inclusion lists. 
In the first phase, a committee of size $\kappa = 32$ constructs partial input lists.
Each committee member is assigned a subset of transactions via a shared allocation mechanism, such that honest inclusion of the assigned transactions forms a correlated equilibrium.
While inclusion of the assigned transactions is a rational strategy given transaction tips, compliance is not verified.
In the second phase, each committee member aggregates the received partial lists and submits a bid proportional to the number of included lists, inducing competition over the final $\Tau_{\mathrm{req}}$.
The block proposer is required to select the aggregate with the highest bid and include the corresponding transaction set in its block.
This design incentivizes inclusion of committee contributions, since omitting lists reduces bid strength and thus expected reward. 
Nevertheless, AUCIL does not provide enforceable accountability for transaction exclusion.
Appendix~\ref{appendix:proof_aucil} shows that even under this mechanism, which we consider the most effective among mechanisms based solely on transaction fees, a transaction can be censored at an expected cost of approximately \num{3.30}~\euro{} per slot under current Ethereum conditions.

\vspace{-3mm}
\subparagraph{Comparison.}
All four mechanisms share the same structural limitation: the absence of accountability in transaction selection keeps the per-slot cost of censorship low, with costs primarily determined by transaction tips and committee size.
We summarize censorship costs over one block, two blocks, and ten blocks (two minutes) for each mechanism in \Cref{cr:cost_all}. 
By enabling verifiability of transaction censorship and attribution of censorship to a specific party, misbehavior can be penalized, for instance through block invalidation and the associated loss of revenue.
With \textsc{FairFIL}, we present a mechanism for detecting transaction censorship, attributing it to a single party, and enabling penalization.
This approach leads, as indicated in \Cref{cr:cost_all}, to significantly higher costs for sustained censorship in practice.

\begin{table}[!ht]
\centering
\begin{tabular}{c l|c|ccc|l}
& \hspace{-6mm}\textbf{Mechanism} & \textbf{Committee} & \multicolumn{3}{c|}{\textbf{Censorship Costs (cumulated)}} & \textbf{Proof} \\
& & & \textbf{1 slot} & \textbf{2 slots} & \textbf{10 slots} & \\
\hline
& \hspace{-6mm}\textbf{Baseline}
& ---
& $\approx \num{0.10}~\text{\euro}$
& $\approx \num{0.20}~\text{\euro}$
& $\approx \num{1}~\text{\euro}$
& App.~\ref{appendix:proof_ethereum} \\
\hline
& \hspace{-6mm}\textbf{Prior Work}
& (constructing)
&
&
&
& \\
& \hspace{-4mm}FIL~\cite{EIP7547}
& 1
& $\approx \num{0.11}~\text{\euro}$
& $\approx \num{0.22}~\text{\euro}$
& $\approx \num{1.10}~\text{\euro}$
& App.~\ref{appendix:proof_fil} \\
& \hspace{-4mm}\textbf{FOCIL}~\cite{thiery2024eip7805}
& $16$
& $\leq \num{1.54}~\text{\euro}$
& $\leq \num{3.08}~\text{\euro}$
& $\leq \num{15.40}~\text{\euro}$
& App.~\ref{appendix:proof_focil} \\
& \hspace{-4mm}MCP~\cite{garimidi_concurrent_2025}
& $16$
& $\leq \num{1.60}~\text{\euro}$
& $\leq \num{3.20}~\text{\euro}$
& $\leq \num{16}~\text{\euro}$
& App.~\ref{appendix:proof_mcp} \\
& \hspace{-4mm}AUCIL$^*$~\cite{aucil}
& $32$
& $\leq \num{3.30}~\text{\euro}$
& $\leq \num{6.60}~\text{\euro}$
& $\leq \num{33}~\text{\euro}$
& App.~\ref{appendix:proof_aucil} \\
\hline
& \hspace{-6mm}\textbf{FairFIL}
& (verifying)
&
&
&
& \\
\multirow{2}{*}{\hspace{0mm}\rotatebox[origin=c]{90}{\underline{Attack}}\hspace{-8mm}}
\vspace{1mm}
& \hspace{2mm}Suppression
& $\approx \num{30000}$
& $\approx \num{0.10}~\text{\euro}$
& $\approx \num{50.34}~\text{\euro}$
& $\approx \num{402}~\text{\euro}$
& Sec.~\ref{sec:suppression} \\
& \hspace{2mm}\textbf{Invalidation}$^{**}$
& $\approx \num{30000}$
& $\approx \num{0.10}~\text{\euro}$
& $\approx \num{28.30}~\text{\euro}$
& $\approx \num{141}~\text{\euro}$
& Sec.~\ref{sec:prop_definitions} \\
\end{tabular}
\vspace{2mm}
\caption{Costs of censoring a median-tip transaction under the mechanisms proposed and surveyed in this paper.
All values are based on empirical measurements from Sep. to Dec. 2025 (\Cref{sec:empirical_baseline}).
Suppression and invalidation are the two ways in which censorship in \textsc{FairFIL} remains feasible at the stated cost; we describe both in detail in~\Cref{sec:prop_definitions}.\\
$^*$ For AUCIL, Wadhwa et al.~\cite{aucil} provide a tighter bound; in practice, bribes may be cheaper.\\
$^{**}$ Invalidating a block can be significantly more expensive due to MEV extraction.}
\label{cr:cost_all}
\end{table}

\vspace{-5mm}
\section{Fair Forward Inclusion Lists}
\label{sec:fairfil}
We present \textsc{FairFIL}, an accountable $\Pi_1$ censorship resistance mechanism.
Block assembler~$\beta_s$ retains full authority over the construction of block~$B_s$ and MEV extraction, but must publish every excluded transaction in a list $\textsc{FairFIL}_s$ alongside the block.
A committee of validators verifies that $\textsc{FairFIL}_s$ is complete with respect to their mempool view; if a majority finds a discrepancy, block~$B_s$ is rejected and the assembler forfeits the full block reward~$\Rmev{B_s}$.
The subsequent assembler~$\beta_{s+1}$ must include every transaction listed in $\textsc{FairFIL}_s$ in block~$B_{s+1}$; any omission forfeits block reward~$\Rmev{B_{s+1}}$.
We first describe how $\beta_s$ constructs $\textsc{FairFIL}_s$, then how validators verify its correctness and enforcement.
Informally, $\textsc{FairFIL}_s$ is correct if it contains exactly those transactions censored from~$B_s$ that remain includable in~$B_{s+1}$.
We formalize and prove these requirements via four accountability properties in \Cref{sec:prop_definitions}.

\subsection{\textsc{FairFIL} Construction}
\label{sec:construction}

\Cref{fairfil:builder} details how block assembler $\beta_s$ produces block~$B_s$ and $\textsc{FairFIL}_s$.
At a high level, $\beta_s$ performs two tasks: fulfilling the inclusion obligation from $\textsc{FairFIL}_{s-1}$, and publishing a complete disclosure of all transactions excluded from block~$B_s$.

\vspace{-1mm}
\subparagraph{Enforcing FairFIL (lines~1--3).}
$\textsc{FairFIL}_{s-1}$ is a hard constraint on block~$B_s$: every listed transaction must appear in~$B_s$.
Block assembler $\beta_s$ integrates the sequence into the block-construction strategy (line~1); lines~2--3 verify that this constraint is satisfied.
Any violation causes~$B_s$ to be rejected (see \Cref{sec:verification}), forfeiting the full block reward~$\Rmev{B_s}$.

\subparagraph{Constructing FairFIL (lines~4--24).}
The central challenge is to construct an inclusion list $\textsc{FairFIL}_s$ that is complete (covering every censored transaction), executable (each listed transaction must remain valid against the post-state of block~$B_s$), and verifiable (validators must be able to independently reconstruct the same transaction set to check for undisclosed exclusions).
Such an inclusion list is constructed in four steps.
First, block assembler $\beta_s$ considers all mempool transactions seen before the start of slot~$s$ (line~4).\footnote{Without a cutoff, $\beta_s$ would need to emulate every transaction in the mempool, including those persisting for months (e.g., spam or long-pending underpriced transactions). The specified \SI{603}{\second} window -- fifty slots (\SI{600}{\second}) plus a \SI{3}{\second} propagation margin (\Cref{sec:empirical:mempool}) -- limits this to a manageable recent window.} 
Second, these transactions are sorted according to Fair Ordering (line~5), using the per-gas tip as primary key and the transaction hash as tie-breaker.
Third, non-censored transactions --~those already contained in $B_s$ as well as those not includable in $B_s$ under Fair Ordering~-- are discarded (lines~9--17).
Fourth, to ensure that the resulting list remains executable for the subsequent block assembler, the censored transactions are executed on the post-$B_s$ state (lines~18--24).
Each transaction is applied in order, with the state updated after each successful execution.
Since block assembler~$\beta_s$ assembles block~$B_s$, the post-$B_s$ state is available at construction time (line~1).
Any transaction whose execution would violate protocol validity is discarded.

\begin{algorithm}[!ht]
\caption{Construction of block $B_s$ and $\textsc{FairFIL}_{s}$ by block assembler $\beta_s$.}
\label{fairfil:builder}
\KwIn{Previous block $B_{s-1}$, previous $\textsc{FairFIL}_{s-1}$, local mempool $\pool_s^{\beta_s}$}
\KwOut{Block $B_{s}$, $\textsc{FairFIL}_{s}$}
\tcc{Block construction with $\textsc{FairFIL}_{s-1}$ forced in (lines 2--3). \\
\textbf{Note}: Assembler $\beta_s$ is \textbf{not} required to follow the ordering of $\textsc{FairFIL}_{s-1}$.}
$B_s \gets \textit{assembleBlock}\hspace{0.5mm}(\textit{optional: } \pool_s^{\beta_s} \cup \mathrm{XOF},\textit{mandatory: } \textsc{FairFIL}_{s-1})$\;
\ForEach{$\tau \in \textsc{FairFIL}_{s-1}$}{
    \textit{assert}($\tau \in B_s$)\;
}
\tcc{Construct a transaction set according to Fair Ordering.}
$\textsc{FairOrdering} \gets \{\tau \in \pool_s^{\beta_s} \mid \mathrm{first\_seen}(\tau) \geq \mathrm{SLOT\_START} - \SI{603}{\second}\}$\;
$\textsc{FairOrdering} \gets \mathrm{sort}(\textsc{FairOrdering},\ \textit{prim: $\ftipgas{\tau}$},\ \textit{sec: tx hash},\ \textit{order: desc.})$\;
$\textsc{Censored} \gets [\,]$\;
$\mathit{state}  \gets \mathrm{state}(B_{s-1} \,||\, $\textsc{FairFIL}$_{s-1})$\,\tcp{appending $\textsc{FairFIL}_{s-1}$ on state of $B_{s-1}$.}
$\mathit{GasUsed} \gets gas(\textsc{FairFIL}_{s - 1})$\;
\ForEach{$\tau \in \textsc{FairOrdering}$}{
    \If{$\mathit{GasUsed} + 21000 > \Glimit(B_{s})$}{
        \Break\; \tcp{Minimum gas consumption is 21000~gas~\cite{wood_ethereum_2025}; no further transaction fits.}
    }
    $\mathit{state}' \gets \mathrm{execute}(\tau \text{ on } \mathit{state} )$\;
    \If{$\mathit{state}'$ is valid}{
        $\mathit{state}  \gets \mathit{state}'$\;
        $\mathit{GasUsed} \gets \mathit{GasUsed} + \mathrm{gas}(\tau)$\;
        \If{$\tau \notin B_s$}{
            append $\tau$ to $\textsc{Censored}$\;
        }
    }
}
\tcc{Determine $\textsc{FairFIL}_{s}$ by verifying includability in block $B_{s+1}$.}
$\textsc{FairFIL}_{s} \gets [\,]$\;
$\mathit{state}  \gets \mathrm{state}(B_{s})$\;
\ForEach{$\tau \in \textsc{Censored}$}{
    $\mathit{state}' \gets \mathrm{execute}(\tau \text{ on } \mathit{state} )$\;
    \If{$\mathit{state}'$ is valid}{
        $\mathit{state}  \gets \mathit{state}'$\;
        append $\tau$ to $\textsc{FairFIL}_{s}$\;
    }
}
\Return $B_s$, $\textsc{FairFIL}_{s}$\;
\end{algorithm}

\subsection{\textsc{FairFIL} Verification}
\label{sec:verification}
We use the validator set $V_s$ of slot $s$ as the committee responsible for verifying $\textsc{FairFIL}_s$.
\Cref{fairfil:verification} refines the verification procedure performed by a validator $v' \in V_s$.
A validator casts a negative vote for block $B_s$ if any of the following checks fails.
Block acceptance requires a majority of validator votes.

\vspace{-2mm}
\subparagraph{Check 1: Enforcement of $\textsc{FairFIL}_{s-1}$ (lines~1--3).}
Every transaction from $\textsc{FairFIL}_{s-1}$ must appear in block $B_s$.
Since the committee of slot $s{-}1$ verified includability (Check~2), any missing transaction is unambiguous evidence of non-compliance by block assembler~$\beta_s$.

\vspace{-2mm}
\subparagraph{Check 2: Includability of $\textsc{FairFIL}_s$ (lines~4--8).}
$\textsc{FairFIL}_s$ must comply with Fair Ordering: transactions are sorted in descending per-gas tip order (hash as tie-breaker), and executing them sequentially on the state induced by block~$B_s$ must preserve protocol validity of block~$B_{s+1}$.
The validator emulates this execution and rejects~$B_s$ if the resulting state is invalid.
A fixed ordering is essential: otherwise, $\beta_s$ could reorder preceding transactions to manipulate the state such that a censored target appears non-includable -- a claim validators could only refute by enumerating all permutations.

\begin{algorithm}[!hp]
\caption{Verification of $\textsc{FairFIL}_s$ by validator $v' \in V_s$.}
\label{fairfil:verification}

\KwIn{$B_{s-1}$, $B_s$, $\textsc{FairFIL}_{s-1}$, $\textsc{FairFIL}_s$, local mempool $\pool_s^{v'}$}
\KwOut{$\textsc{Vote}_{v'} \in \{\textsc{FairFIL}_s\text{ valid},\ B_s\text{ invalid}\}$}

\tcc{Check 1: Verifying Inclusion of all previous-FairFIL transactions in $B_s$}
\ForEach{$\tau \in \textsc{FairFIL}_{s-1}$}{
    \If{$\tau \notin B_s$}{
        \Return{$\textsc{Vote}_{v'}: B_s$ invalid}
    }
}

\tcc{Check 2: Verifying self-consistency of $\textsc{FairFIL}_s$}
\If{$\textsc{FairFIL}_s$ is not sorted correctly}{
    \Return{$\textsc{Vote}_{v'}: B_s$ invalid}
}
$\mathit{state} ' \gets \mathrm{execute}(\textsc{FairFIL}_s \text{ on } \mathrm{state}(B_s))$\;
\If{$\mathit{state} '$ is not valid}{
    \Return{$\textsc{Vote}_{v'}: B_s$ invalid}
}

\tcc{Check 3: Bounding unknown transactions; Constructing extended mempool}
$\mathrm{UnknownCount} \gets 0$; \quad $\mathrm{UnknownGas} \gets 0$\;
$\extpool_s^{v'} \gets \pool_s^{v'}$\;
\ForEach{$\tau \in \textsc{FairFIL}_s$}{
    \If{$\tau \notin \pool_s^{v'}$}{
        $\mathrm{UnknownCount} \mathrel{+}= 1$;\;
        $\mathrm{UnknownGas} \mathrel{+}= \mathrm{gas}(\tau)$;\;
        $\extpool_s^{v'} \gets \extpool_s^{v'} \cup \{\tau\}$\;
    }
}
\If{$\mathrm{UnknownCount} > \mathrm{threshold}(\pool_s^{v'}).\mathrm{count}$ \textbf{or} $\mathrm{UnknownGas} > \mathrm{threshold}(\pool_s^{v'}).\mathrm{gas}$}{
    \Return{$\textsc{Vote}_{v'}: B_s$ invalid}
}

\tcc{Check 4: Check $\textsc{FairFIL}_s$ correctness by constructing a censorship-free block}
$\textsc{FairBlock}_s \gets [\,]$\;
\begin{minipage}[t]{\linewidth}
$\textsc{LocalFairOrdering} \gets \{\tau \in \extpool_s^{v'} \mid \tau \in \textsc{FairFIL}_s \;\lor\;$\\
\hspace*{12em}$\SI{3}{\second} \leq \mathrm{SLOT\_START} - \mathrm{first\_seen}(\tau) \leq \SI{600}{\second}\}$
\end{minipage}\;
\begin{minipage}[t]{\linewidth}
$\textsc{LocalFairOrdering} \gets \mathrm{sort}(\textsc{LocalFairOrdering},\ \textit{prim: $\ftipgas{\tau}$},$\\
\hspace*{13.5em}$\textit{sec: tx hash},\ \textit{order: desc.})$
\end{minipage}\;

$\mathit{state}  \gets \mathrm{state}(B_{s-1} \,||\, $\textsc{FairFIL}$_{s-1})$\,\tcp{appending $\textsc{FairFIL}_{s-1}$ on state of $B_{s-1}$.}
$\mathit{GasUsed} \gets gas(\textsc{FairFIL}_{s - 1})$\;

\ForEach{$\tau \in \textsc{LocalFairOrdering}$}{
    \If{$\mathit{GasUsed} + 21000 > \Glimit(B_{s})$}{
        \Break\;
    }
    $\mathit{state} ' \gets \mathrm{execute}(\tau \text{ on } \mathit{state} )$\;
    \If{$\mathit{state} '$ is valid}{
        $\mathit{state}  \gets \mathit{state} '$\;
        $\mathit{GasUsed} \gets \mathit{GasUsed} + \mathrm{gas}(\tau)$\;
        append $\tau$ to $\textsc{FairBlock}_s$\;
    }
}

\ForEach{$\tau \in \textsc{FairBlock}_s$}{
    \If{$\tau \notin \textsc{FairFIL}_s \;\land\; \tau \notin B_s$}{
        \Return{$\textsc{Vote}_{v'}: B_s$ invalid}
    }
}

\Return{$\textsc{Vote}_{v'}: \textsc{FairFIL}_s$ valid}
\end{algorithm}

\vspace{-2mm}
\subparagraph{Check 3: Bounding unknown transactions (lines~9--17).}
$\textsc{FairFIL}_s$ may contain transactions unknown to a validator, either due to mempool inconsistencies or because the block assembler~$\beta_s$ deliberately injects non-propagated (XOF) transactions.
We show in \Cref{sec:empirical:mempool} that minor mempool inconsistencies can occur, particularly at home-staking validators; highly optimized builders, operating multiple nodes, are expected to maintain a complete mempool view.
Injected XOF poses two problems.
First, any XOF transaction in $\textsc{FairFIL}_s$ must be included in block~$B_{s+1}$, allowing block assembler~$\beta_s$ to extract MEV across two slots and depriving~$\beta_{s+1}$ of both block space and the opportunity to extract this MEV itself -- a dynamic~$\beta_s$ can exploit strategically.
Second, a central objective of accountability is that actions within \textsc{FairFIL} remain verifiable for users; injected XOF undermines this, as such transactions appear in $\textsc{FairFIL}_s$ without prior public visibility and can displace legitimate mempool transactions, as demonstrated in~\Cref{tab:censorship-example}.
Since validators cannot distinguish injected XOF from mempool inconsistencies, we define a private, per-validator threshold on the count and cumulative gas of unknown transactions in $\textsc{FairFIL}_s$; exceeding the threshold causes~$B_s$ to be rejected.
We specify the threshold in \Cref{sec:empirical:mempool} and argue that the size required to absorb mempool inconsistencies is too small for~$\beta_s$ to consume substantial block space or extract meaningful MEV.
Unknown transactions within the threshold bounds are added to an extended mempool to preserve dependency chains between known and unknown transactions (line~15).

\vspace{-2mm}
\subparagraph{Check 4: Accountability (lines~18--33).}
To verify that $\beta_s$ accounted for every exclusion, the validator applies Fair Ordering to the extended mempool and constructs a local `censorship-free' reference block~$\textsc{FairBlock}_s$.
Every transaction in $\textsc{FairBlock}_s$ must appear either in block $B_s$ (included) or in $\textsc{FairFIL}_s$ (censored by~$\beta_s$).
Any transaction absent from both reveals an undisclosed exclusion and causes the validator to vote against~$B_s$.

\section{Analysis of \textsc{FairFIL}}
\label{sec:properties}
In this section, we first demonstrate deployability by evaluating \textsc{FairFIL} on Ethereum mainnet (\Cref{sec:empirical}) and validate the key assumption of consistent mempools empirically (\Cref{sec:empirical:mempool}).
Then, we prove that non-compliance with \textsc{FairFIL} causes block invalidation, thereby establishing $(\mathcal{M},1)$-censorship resistance, where $\mathcal{M}$ is approximately equal to the block reward (\Cref{sec:prop_definitions}).
\Cref{censorship:cost} quantifies censorship costs across all mechanisms.

\subsection{Empirical Evaluation}
\label{sec:empirical}
We implement \textsc{FairFIL} and emulate its operation on \num{214600} Ethereum mainnet blocks from September~2025, measuring the sizes of the resulting per-slot inclusion lists and the number of censored transactions forced into the subsequent block.
To identify censored transactions, we reconstruct a censorship-free reference block for each slot and compare it to the actual Ethereum mainnet block~$B_s$, then construct $\textsc{FairFIL}_s$ according to \Cref{fairfil:builder}.

\subparagraph{Method.}
We use transactions from Flashbots' Mempool Dumpster~\cite{flashbots_mempool_dumpster_2025}, a public archive of publicly propagated transactions collected by multiple geographically distributed nodes.
For each block $B_s$, we extract every transaction first observed between \SI{3}{\second} and \SI{600}{\second} before the start of slot $s$ and sequence them according to Fair Ordering.
We execute this sequence using a modified version~\cite{fairfil_anvil} of Foundry's Anvil~\cite{foundry_anvil}, adapted to replay a specified transaction sequence against a given historical state.
We skip transactions that are not validly includable, as they are not censored in block $B_s$.
The historical state corresponds to the post-state of block $B_{s-1}$ and was obtained from a self-operated Ethereum execution-layer node~\cite{paradigmxyz_reth}.

\vspace{-1mm}
\subparagraph{Results.}
\label{fairfil:results}
While most blocks are censorship-free, we observe an average of \num{8.3} censored transactions per block across all observed blocks, amounting to \num{1.78}~million transactions censored by at least one slot during September~2025.
Of the \num{214600} blocks examined, \num{81411} blocks (\SI{38}{\percent}) yielded a non-empty $\textsc{FairFIL}_s$; averaged over all blocks, a $\textsc{FairFIL}_s$ contains \num{7.3} transactions.
\Cref{fig:censored_size} shows the sizes of $\textsc{FairFIL}_s$, grouped by block assembler.
Each non-empty $\textsc{FairFIL}_s$ consumed a median of \num{63209}~gas in the subsequent block (95th percentile (P95): \num{6057234}~gas), corresponding to \SI{0.1}{\percent} of the block's gas budget (P95: \SI{13.3}{\percent}).
Inclusion of the transactions in $\textsc{FairFIL}_s$ succeeded in every case, as we formally prove in \Cref{lemma:prop_includability}.
We validated robustness by repeating the analysis with the mempool of our own Ethereum node and obtained nearly identical results.
Of all transactions censored in slot~$s$, \SI{87.4}{\percent} were still includable in block $B_{s+1}$ and enforced by $\textsc{FairFIL}_s$; the remaining \SI{12.6}{\percent} were no longer includable.
For \SI{49.1}{\percent} of these non-includable transactions, the provided base fee fell below block $B_{s+1}$'s base fee.
Since base fees rise by at most \SI{12.5}{\percent} between two consecutive blocks~\cite{EIP-1559}, this non-includability can be prevented by accounting for this worst‑case increase upfront in the user's base‑fee bid.
A further \SI{50.6}{\percent} had been replaced by the user, and the remaining \SI{0.3}{\percent} were prevented by other user-initiated state changes.
Constructing the $\textsc{FairFIL}_s$ takes at most \SI{0.388}{\second} per slot
in our unoptimized, single-threaded implementation on an AMD EPYC~4564P.
This is well below the \SI{4}{\second} attestation deadline, indicating that construction cost is not a limiting factor for the timeliness requirements imposed by Ethereum's current slot anatomy.

\begin{figure*}[!ht]
    \centering
    \includegraphics[clip, trim=0 2cm 0 0cm, width=\textwidth]{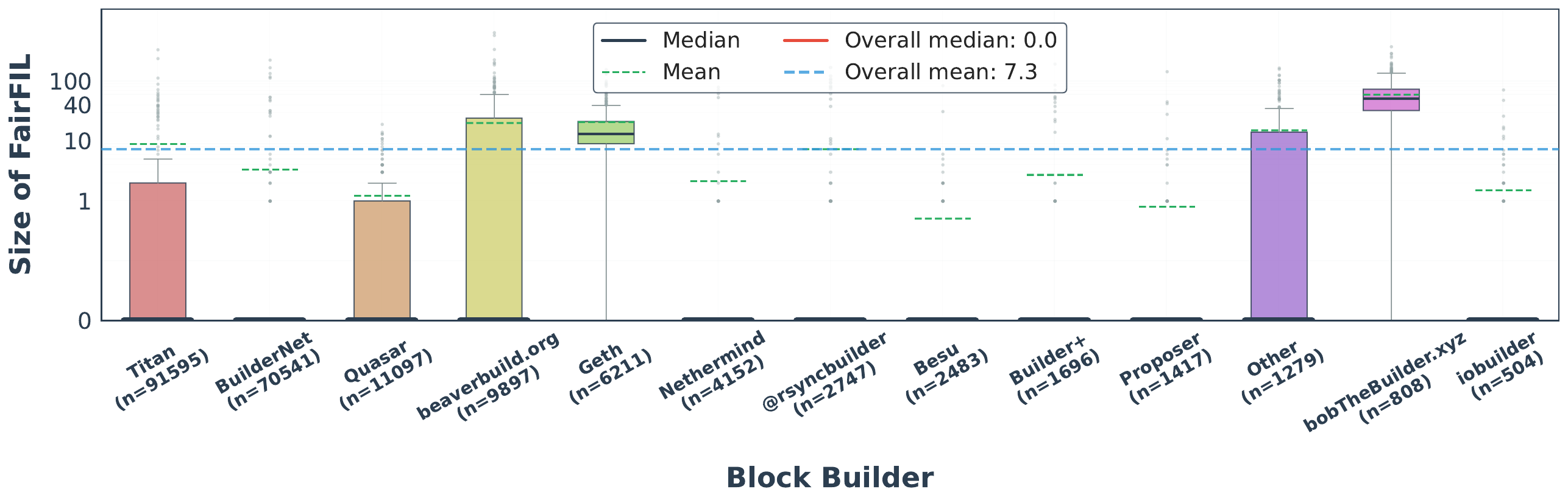}
    \caption{Size of \textsc{FairFIL} (in number of transactions) per block between blocks \num{23264600} and \num{23479200} (September~2025), grouped by block assembler.
    Note the logarithmic $y$-axis and the number $n$ of blocks proposed by each assembler during the observation period.
    ``Proposer'' reflects block assemblers that disclose no identifying information. Block assemblers with fewer than 500 proposed blocks during the observation period are aggregated into ``Other''.
    }
    \label{fig:censored_size}
\end{figure*}

\subsection{Mempool Consistency}
\label{sec:empirical:mempool}
Both the operation and the analysis of \textsc{FairFIL}'s censorship-resistance properties assume that all participants share a sufficiently consistent mempool view.
We now revisit this assumption using the September~2025 mempool dataset of the Decentralized Systems Lab at Yale University~\cite{mempool_guru}, a data source distinct from the one used in \Cref{sec:empirical} to guard against source-specific artifacts.
In particular, we need to provide evidence that block assemblers can observe every transaction the majority of validators deems relevant under Fair Ordering, and that this majority observes those transactions within the required time window.
We show that both hold sufficiently well for \textsc{FairFIL}: block assemblers can be expected to observe all transactions propagated through Ethereum's P2P network in a timely manner, and in every block, a vast majority of validators would have voted correctly on \textsc{FairFIL} compliance.

\vspace{-1mm}
\subparagraph{Measurement setup.}
The dataset provided in~\cite{mempool_guru} contains mempool observations from three nodes geographically distributed across the United States and Germany.
We combine this dataset with mempool data from our own Ethereum node, located in Karlsruhe (Germany).
Each node logs the first-observation timestamp of a transaction received through Ethereum's P2P layer.
For every transaction contained in the censorship-free blocks (cf. \Cref{sec:empirical}), we take the earliest first-observation timestamp across all four nodes as the transaction's propagation origin and then measure how many of the remaining nodes observed the transaction within the subsequent \SI{3}{\second}.

In this analysis, we do not consider transactions that were not includable in the next possible block, since spam transactions (e.g., severely underpriced ones) would otherwise distort the measurement.
The window of three seconds is motivated by prior measurements showing that block assemblers typically begin block construction about \SI{3}{\second} before slot start.
Overall, the measurement covers $n = \num{19609591}$ transactions from September~2025.

\subparagraph{Results.}
In \SI{96.12}{\percent} of all blocks, all four monitors observed every transaction that should appear in a censorship-free block.
Moreover, for every block, at least two of the four nodes observed all such transactions.
When deviations occurred, they were typically minimal: for each individual node, about \SI{2}{\percent} of blocks were missing at most a single required transaction.
For the remaining \SI{1.88}{\percent} of blocks, closer inspection suggests that the deviations are primarily caused by temporary disruptions at individual nodes, since missed transactions typically occur in bursts across multiple consecutive blocks.
Propagation latency across the four nodes, defined as the time between the first observation at any node and the first observation at the last receiving node, was below \SI{0.468}{\second} for \SI{95}{\percent} of transactions.
\Cref{fig:mempool_consistency} in Appendix~\ref{appendix:mempool_consistency} shows the resulting per-node consistency distribution.
Two quantitative observations follow.

\begin{observation}[Single-node coverage]
\label{obs:consistency}
A single reachable node observes at least \SI{90}{\percent} of the public transactions that Fair Ordering requires for inclusion in block~$B_s$, within \SI{3}{\second} of first observation by any of the monitoring nodes.
\end{observation}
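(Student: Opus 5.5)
This observation is an empirical claim, so I will establish it by direct computation over the measurement set described above rather than by a deductive argument. The target set is fixed per block: for each block $B_s$ of September~2025, let $\mathcal{T}_s$ be the public transactions appearing in the reconstructed censorship-free reference block of \Cref{sec:empirical}. Transactions that were not includable in the next possible block are excluded, as stated in the measurement setup. For each $\tau \in \mathcal{T}_s$, I will define its propagation origin $t_0(\tau)$ as the minimum first-observation timestamp across the four monitoring nodes, i.e., the three Yale nodes and our Karlsruhe node.

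The per-node statistic is then computed as follows. For each node $n$ and block $s$, I will compute the coverage ratio
\[
c_n(s) = \frac{\lvert\{\tau \in \mathcal{T}_s : \mathrm{first\_seen}_n(\tau) \le t_0(\tau) + 3\,\mathrm{s}\}\rvert}{\lvert \mathcal{T}_s\rvert},
\]
with the convention $c_n(s)=1$ when $\mathcal{T}_s=\emptyset$. The observation is read as: for every reachable node $n$ and every block $s$ (or, if the data forces it, every block outside the documented outage bursts), $c_n(s) \ge 0.9$. The supporting facts are already reported. In \SI{96.12}{\percent} of blocks all nodes have $c_n(s)=1$, and deviations are typically a single missing transaction. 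A single miss against a block that requires on the order of a hundred public transactions gives a ratio far above $0.9$. I will display the resulting distribution of $c_n(s)$ per node, as in the figure in Appendix~\ref{appendix:mempool_consistency}, and read off the minimum over non-outage blocks.

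The main obstacle is the residual \SI{1.88}{\percent} of blocks and small blocks. If $\lvert\mathcal{T}_s\rvert$ is tiny, one miss can push $c_n(s)$ below $0.9$. A node outage can likewise give $c_n(s)\approx 0$ over a burst of consecutive blocks. My first step is to make precise what ``reachable'' means: a node counts as reachable in slot $s$ if it logged some first observation within the surrounding window, which removes the outage bursts. If small-$\lvert\mathcal{T}_s\rvert$ blocks still violate the bound, I will weaken the reading so the statement is about coverage aggregated over blocks, pooling transactions across blocks per node. Under that reading the bound follows comfortably from the reported \SI{2}{\percent} single-miss rate. Either way, I will state explicitly which aggregation is used, so that the \SI{90}{\percent} figure can later feed directly into the choice of the unknown-transaction threshold in Check~3.
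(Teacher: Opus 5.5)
Your computation is the same as the paper's. Appendix~\ref{appendix:mempool_consistency} also computes, per censorship-free block and per monitor, the fraction of required transactions seen within 3\,s of their earliest first-observation across monitors. It likewise attributes the low tail to short node outages, because misses cluster in consecutive blocks.

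Where you differ is in which statistic carries the 90\% figure, and that matters. The paper does not filter out outage blocks, take a minimum, or pool across blocks. It reads the bound off low percentiles of the per-node, per-block distribution: the median is 100\%, Frankfurt has P2 $=$ 98.55\% and P1 $=$ 89.36\%, and Dallas and Oregon are slightly higher. The observation is therefore an approximate percentile-level statement; read strictly, Frankfurt's P1 is even marginally below 90\%. The word ``reachable'' excuses the outage tail only qualitatively.

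Your primary reading, a minimum over non-outage blocks, asks for more than the reported data can give. Nothing in the measurements bounds the per-block minimum. Your reachability test (``logged some first observation in the surrounding window'') would also not remove partial disruptions, where a node keeps logging but lags or drops a burst. Those are exactly the kind of episodes the paper suspects behind Frankfurt's P1, so a strict minimum over your ``reachable'' blocks may well dip below 0.9.

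Your fallback of pooling transactions across blocks does follow easily from the 96.12\%/2\%/1.88\% split. However, it changes the claim, which is per block $B_s$ and is meant to describe what a single validator sees in a given slot. The percentile reading sits between these two and is what the evidence actually supports.

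Finally, the paper does not derive the Check~3 threshold from this 90\% figure. Observation~\ref{obs:consistency} only motivates having a validator-local threshold at all. Its calibration rests on \Cref{obs:consistency_merged} and on the measured behaviour of the adaptive threshold, which stays at or near zero.
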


\begin{observation}[Multi-node coverage]
\label{obs:consistency_merged}
A block assembler that merges the mempool views of three independent nodes observes every public transaction that Fair Ordering requires for inclusion in block~$B_s$; using two nodes already covers at least \SI{99.79}{\percent} of these transactions.
\end{observation}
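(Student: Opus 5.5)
The plan is to verify this observation empirically on the measurement setup of \Cref{sec:empirical:mempool}, and to reduce the three-node claim to a per-transaction counting argument so that it does not depend on which three nodes the assembler happens to run. Fix the set $T_s$ of public transactions that Fair Ordering requires in block~$B_s$, namely the transactions of the censorship-free reference block constructed in \Cref{sec:empirical}, restricted to those includable in the next possible block. For each $\tau \in T_s$, let $t_0(\tau)$ be the earliest first-observation timestamp over the four monitoring nodes. Node $n$ \emph{covers}~$\tau$ if its own first-observation timestamp is at most $t_0(\tau)+\SI{3}{\second}$. The origin node covers $\tau$ by definition. For a set of nodes $S$, define its coverage of block $B_s$ as the fraction of $T_s$ covered by at least one node in~$S$.

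The key step is the three-node case, which I intend to derive from a per-transaction statement. The first step is to establish, from the dataset, that every $\tau$ in every $T_s$ is covered by at least two of the four nodes. This is a slightly weaker reformulation of the reported result that, in every block, at least two nodes observed all required transactions. The weakening is deliberate: the two covering nodes may differ from one transaction to the next. The second step is a pigeonhole argument. Any three-node subset omits exactly one node, and since $\tau$ is covered by two distinct nodes, at least one of them lies in the subset. Hence every triple achieves coverage $1$ on every block, and this holds regardless of which triple is chosen. That independence is what licenses the phrase ``three independent nodes'' rather than three specific monitors.

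For the two-node claim no such argument is available: a pair misses $\tau$ exactly when both of its nodes fail to cover it, and this can happen even if two other nodes do cover it. I will therefore compute the coverage directly for each of the $\binom{4}{2}=6$ pairs and every block, and report the minimum over pairs of the aggregate fraction $\sum_s |\text{covered}|/\sum_s |T_s|$ over September~2025. The claim is that this minimum is at least $\SI{99.79}{\percent}$. I will also record the per-block worst case, so that it is explicit whether the bound is aggregate or per block.

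The main obstacle is not the computation but the generalization from four specific monitors to an arbitrary assembler's nodes. The pigeonhole argument only certifies coverage relative to the monitoring set, because $t_0(\tau)$ is itself measured by these monitors. To address this, I will first check robustness by swapping in the independent Mempool Dumpster source from \Cref{sec:empirical} as a fifth reference for $t_0$, and verify that the per-transaction two-coverage property still holds. I will then state explicitly that the observation is empirical and conditional on the observed propagation latency, which is below $\SI{0.468}{\second}$ at P95 and thus well within the $\SI{3}{\second}$ window. Finally, I will attribute the residual failures to node outages, consistent with the bursty misses noted above.
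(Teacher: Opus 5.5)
Your proposal is correct and takes essentially the paper's route: the observation is read off the four-node measurement of \Cref{sec:empirical:mempool} (earliest first observation as origin, \SI{3}{\second} window, non-includable transactions excluded), with the three-node claim following by your pigeonhole step from the finding that in every block at least two of the four nodes observed all required transactions, and the two-node figure computed directly. Your per-transaction weakening of that finding suffices, and you correctly note that extending it from the four monitors to an arbitrary assembler's nodes is an empirical extrapolation the paper also makes; one small caveat is that in your optional fifth-source robustness check, if the fifth source counts toward the two covering nodes, two-of-five coverage no longer guarantees that every triple of the original four monitors covers each transaction.
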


\vspace{-4mm}
\subparagraph{Interpretation.}\label{dis:xof_threshold}
The data suggest that this level of consistency is sufficient for \textsc{FairFIL}.
These results do not imply that mempool consensus is trivial. 
Instead, they provide evidence that the transaction set relevant to \textsc{FairFIL} per slot is likely observable by both a well-provisioned builder and the validator majority within the required time window, without requiring agreement on temporal ordering.
Under enshrined PBS~\cite{EIP7732}, expected in mid-2026, proposers obtain protocol-level access to a specialized builder market in which builders operate as highly optimized infrastructure rather than as individual home servers.
Under these conditions, a rational builder can be expected to operate multiple geographically distributed nodes and, by \Cref{obs:consistency_merged}, not to miss any transaction that has been known to a majority of nodes for at least \SI{3}{\second}.
Regarding enforcement, the same does not hold for validators running a single node, so \textsc{FairFIL} includes an adaptive, validator-local XOF threshold (cf.\ \Cref{fairfil:verification}, Check~3) to mitigate occasional false votes due to incomplete mempool views.
The threshold should be unknown to the block assembler and could be designed to self-adjust: rising on repeated disagreement with the majority, falling otherwise.
Empirically, a majority of validators observes all \textsc{FairFIL}-relevant transactions in all cases (\Cref{obs:consistency_merged}).
Our evaluation of the adaptive threshold shows that it remains equal to zero for most validators and close to zero for the remainder across all blocks.

\subsection{Analyzing \textsc{FairFIL}'s Censorship Resistance}
\label{sec:prop_definitions}
We introduce four properties that jointly characterize an accountable mechanism and imply censorship resistance against all protocol-deviating attacks.
We prove that \textsc{FairFIL} satisfies each property under the assumption of a bribing budget~$\mathcal{M} \leq \min ( \mathcal{R}(B_s), \mathcal{R}(B_{s+1}))$:
any violation of an accountability property invalidates a block either at construction (rejecting block~$B_s$) or at enforcement time (rejecting block~$B_{s+1}$).
Property~\ref{prop:includability} ensures that the obligation to include the set of required transactions~$\Tau_{\mathrm{req}}$ imposed by \textsc{FairFIL} can be fulfilled, so failing to include~$\Tau_{\mathrm{req}}$ is unambiguous non-compliance.
Property~\ref{prop:vaware} requires every censored transaction to appear in $\Tau_{\mathrm{req}}$; Property~\ref{prop:freedom} restricts $\Tau_{\mathrm{req}}$ to publicly observed transactions.
Property~\ref{prop:compliance} requires that the obligated assembler has no rational incentive to deviate from the obligation.
Enforcement rests on majority voting of $V_s$ on Checks~1--4 (\Cref{fairfil:verification}); reversing this verdict requires bribing a majority of~$V_s$ at a cost exceeding the full block reward $\Rmev{B_s}$.
Throughout, we assume the majority of validators votes correctly.
\begin{property}[Includability]
\label{prop:includability}
For each transaction $\tau \in \Tau_{\mathrm{req}}$, the block assembler in slot $s+n$ can include $\tau$ in block~$B_{s+n}$ at the position prescribed by $\Pi_n$, while preserving block validity.
\end{property}

A foundational requirement of accountability is the distinction between deliberate non-compliance and circumstances that prevent compliance.
Property~\ref{prop:includability} establishes this distinction by admitting only obligations that the corresponding block assembler can fulfill.
Without this property, the assembler could populate $\Tau_{\mathrm{req}}$ with transactions that consume list capacity but cannot be included in the enforcement block, thereby preventing inclusion of the target transaction.
Prior work~\cite{aucil} refers to this property as ``unconditional''.

\begin{lemma}
\label{lemma:prop_includability}
\textsc{FairFIL} satisfies Property~\ref{prop:includability}: every transaction in $\textsc{FairFIL}_s$ is includable in block $B_{s+1}$ at the position prescribed by $\textsc{FairFIL}_s$ without violating the validity of $B_{s+1}$.
Any transaction failing this condition would cause block~$B_s$ to be rejected.
\end{lemma}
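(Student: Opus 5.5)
The argument rests on Check~2 of \Cref{fairfil:verification}, combined with the majority-voting assumption stated at the start of this subsection. The plan is to interpret ``the position prescribed by $\textsc{FairFIL}_s$'' as the Fair-Ordering sequence executed directly on $\mathrm{state}(B_s)$, i.e., as a prefix of $B_{s+1}$. Includability is then equivalent to the proposition that sequentially executing $\textsc{FairFIL}_s$ on $\mathrm{state}(B_s)$ yields a valid state. We split the proof into the two claims of the lemma.

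\emph{Second claim (rejection).} Suppose some $\tau\in\textsc{FairFIL}_s$ is not includable at its prescribed position. Two cases arise.
\begin{itemize}
\item The list is not sorted by Fair Ordering. Then the sortedness test of Check~2 fails.
\item The list is sorted. Then the sequential execution of $\textsc{FairFIL}_s$ on $\mathrm{state}(B_s)$ produces an invalid state at $\tau$. This covers nonce, balance and signature errors, a base-fee shortfall against $B_{s+1}$'s base fee, and the gas-limit condition, since the cumulative gas must fit in $\Glimit(B_{s+1})$.
\end{itemize}
Check~2 is a deterministic emulation depending only on the public data $B_s$ and $\textsc{FairFIL}_s$, not on any local mempool view. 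Every correctly behaving validator therefore reaches the same verdict and votes $B_s$ invalid. Under the standing assumption that a majority of $V_s$ votes correctly, and that bribing this majority exceeds $\mathcal{M}$, block $B_s$ is rejected.

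\emph{First claim (includability).} We argue by contraposition from the second claim. If $B_s$ is accepted together with $\textsc{FairFIL}_s$, then Check~2 passed. Hence executing $\textsc{FairFIL}_s$ in order on $\mathrm{state}(B_s)$ is valid, and placing these transactions as the first transactions of $B_{s+1}$ reproduces exactly this execution. The honest assembler additionally produces such a list by construction: lines~18--24 of \Cref{fairfil:builder} keep only transactions that execute validly on the running post-$B_s$ state. The induced order is a subsequence of the sorted $\textsc{FairOrdering}$, so the list is sorted. For the gas bound, we note that the censored transactions were admitted in lines~9--17 under the gas counter. That counter started at $\mathrm{gas}(\textsc{FairFIL}_{s-1})$ and only ever increased, so their total gas is at most $\Glimit$, assuming the gas limit does not drop between slots.

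\emph{Main obstacle.} The delicate step is that $B_{s+1}$'s execution context differs from the emulated one: the base fee may rise by up to $12.5\%$, and the block-level environment (gas limit, coinbase, timestamp) changes. Check~2 is phrased as preserving validity of $B_{s+1}$. We would therefore make explicit that the emulation uses $B_{s+1}$'s deterministic header parameters, namely the base fee computed from $B_s$ via EIP-1559 and the inherited gas limit, so that state validity in the check coincides with validity in $B_{s+1}$. We would also argue that later transactions of $B_{s+1}$ cannot affect an earlier prefix. Assembler freedom to reorder, noted in \Cref{fairfil:builder}, is not needed, because the lemma only asserts feasibility at the prescribed position.
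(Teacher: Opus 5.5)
Your proposal is correct and follows essentially the paper's argument. Check~2 deterministically emulates $\textsc{FairFIL}_s$ on $\mathrm{state}(B_s)$, a correctly voting majority accepts $B_s$ only if this emulation succeeds, and hence the list can be placed as a prefix at the start of $B_{s+1}$; the paper's step ``no retroactive invalidation, so every prefix is valid'' corresponds to your remark that later transactions cannot affect an earlier prefix. Your explicit treatment of $B_{s+1}$'s header parameters is extra care the paper omits. Your aside on the honest builder's gas bound is not needed for the lemma and is not quite sound as stated: gas counted in lines~9--17 on a different pre-state need not bound the gas consumed when the transactions are re-executed on $\mathrm{state}(B_s)$.
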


\begin{proof}
By Check~2 of \Cref{fairfil:verification}, the includability of $\textsc{FairFIL}_s$ requires that executing the entire transaction sequence of $\textsc{FairFIL}_s$ on top of $B_s$'s induced state yields a valid state.
Since every operation in Ethereum is deterministic~\cite{buterin_ethereum_2014}, all validators compute the same state and apply the same $\textsc{FairFIL}_s$, so the execution outcome is identical across validators.
Because transactions in a sequential execution do not retroactively invalidate earlier transactions -- each transaction's validity depends only on the state induced by its predecessors -- every prefix of $\textsc{FairFIL}_s$ executes validly on top of $B_s$'s induced state.
In particular, the prefix ending in~$\tau$ executes without violating validity.
A majority of validators in~$V_s$ has verified this condition and accepted the block; otherwise $B_s$ is rejected.
Therefore, $\beta_{s+1}$ can include this prefix at least at the beginning of $B_{s+1}$ without violating the validity of the protocol.
\end{proof}
 
\begin{property}[Accountable listing]
\label{prop:vaware}
Let $\tau$ be a transaction observed by a majority of committee members by a specific point in time.
Following Fair Ordering: if $\tau$ is includable in block~$B_{s+n}$, then $\tau \in \Tau_{\mathrm{req}}$, unless including~$\tau$ would displace a higher-tipped transaction $\tau'$.
\end{property}
 
An enforceable obligation does not establish accountability if the content of the obligation cannot be verified.
Property~\ref{prop:vaware} extends accountability to the construction of $\Tau_{\mathrm{req}}$: from a consistent mempool view, any party can determine which transactions Fair Ordering mandates, thereby making any omission attributable to the corresponding assembler.

\begin{lemma}
\label{lemma:prop_listing}
\textsc{FairFIL} satisfies Property~\ref{prop:vaware}: every transaction $\tau$ that (i) is observed at least \SI{3}{\second} before the start of slot $s$ by a majority of $V_s$, (ii) is includable in both blocks $B_s$ and $B_{s+1}$, and (iii) would not displace any transaction $\tau'$ with $\ftipgas{\tau'} > \ftipgas{\tau}$, is included in either block $B_s$ or $\textsc{FairFIL}_s$.
Any omission causes block~$B_s$ to be rejected.
\end{lemma}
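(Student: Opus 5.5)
The plan is a contrapositive argument through Check~4 of \Cref{fairfil:verification}. Fix a transaction $\tau$ satisfying (i)--(iii), and suppose $\tau \notin B_s$ and $\tau \notin \textsc{FairFIL}_s$. It suffices to show that every validator $v'$ in the majority of $V_s$ from condition~(i) places $\tau$ in its reference block $\textsc{FairBlock}_s$. The final loop of Check~4 then makes $v'$ vote ``$B_s$ invalid''. Because a majority votes this way and we assume the majority votes correctly, $B_s$ is rejected and $\beta_s$ forfeits $\Rmev{B_s}$.

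First I would show that $\tau$ enters $\textsc{LocalFairOrdering}$ for every such $v'$. By (i), $v'$ first saw $\tau$ at least \SI{3}{\second} before $\mathrm{SLOT\_START}$. The upper bound of \SI{600}{\second} is the same window used in \Cref{fairfil:builder}; I would treat it as implicit in ``includable in $B_s$'' under the paper's operationalization of \Cref{def:censorship}. Since $\pool_s^{v'} \subseteq \extpool_s^{v'}$, $\tau$ passes the filter. I would also note that $v'$ reaches Check~4 at all. If $v'$ already rejected $B_s$ in Checks~1--3, the conclusion holds trivially, since a rejection is exactly what we want.

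Second, I would show that the greedy Fair Ordering pass of $v'$ admits $\tau$. The pass starts from $\mathrm{state}(B_{s-1}\,\|\,\textsc{FairFIL}_{s-1})$ with $\mathit{GasUsed} = gas(\textsc{FairFIL}_{s-1})$. This is the same starting point as \Cref{fairfil:builder}, and it reflects that \textsc{FairFIL}$_{s-1}$ transactions are mandatory in $B_s$ by Check~1. By (ii), $\tau$ is includable in $B_s$ under Fair Ordering. By (iii), admitting $\tau$ never forces out a transaction with strictly higher tip. The transactions preceding $\tau$ in descending-tip order are processed first. Condition~(iii), read as ``after all higher-tipped admissible transactions have been admitted, $\tau$ still executes validly and fits within $\Glimit(B_s)$'', therefore gives that the state and gas at $\tau$'s turn admit it. Determinism of execution, as in the proof of \Cref{lemma:prop_includability}, makes this outcome the same for every honest $v'$ with the same relevant mempool view.

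Third, I would explain why conditions (i) and (ii) refer to both $B_s$ and $B_{s+1}$. Inclusion in $B_{s+1}$ matters only for the rational assembler: an honest \Cref{fairfil:builder} would list $\tau$. If $\tau$ is censored from $B_s$, it lands in $\textsc{Censored}$ and, being includable on $\mathrm{state}(B_s)$, in $\textsc{FairFIL}_s$. So honest construction never triggers the rejection, and the rejection cleanly attributes the omission to $\beta_s$.

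The main obstacle is the second step. The greedy prefix seen by $v'$ may differ from the assembler's, because extra unknown transactions were added to $\extpool_s^{v'}$ or because $v'$ observed slightly different mempool contents. These differences could alter the state or the gas used before $\tau$'s turn. My first attempt would be to argue that (iii) together with the paper's identical-mempool assumption from \Cref{sec:model} makes the prefix preceding $\tau$ identical across the majority. Any residual difference consists only of transactions listed in $\textsc{FairFIL}_s$, which belong to Fair Ordering anyway and are bounded by Check~3. If that fails, I would weaken the claim to the majority sharing the mempool view on the higher-tipped prefix, which is what \Cref{obs:consistency_merged} empirically supports.
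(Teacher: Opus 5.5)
Your proposal is correct and follows the paper's proof essentially step for step. The paper likewise takes the majority $W$ from (i), uses $\tau \in \pool_s^{v} \subseteq \extpool_s^{v}$ to place $\tau$ in the local Fair Ordering, and invokes (iii) so the iteration reaches $\tau$ and (ii) so it executes validly; it then concludes that every $v \in W$ votes against $B_s$ in the final loop of Check~4. The paper also closes with the budget argument, which you leave implicit: compensating the forfeited $\Rmev{B_s} \geq \mathcal{M}$ is infeasible, so a rational $\beta_s$ does not omit $\tau$. It simply skips the prefix-consistency and \SI{600}{\second}-window subtleties you raise.
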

 
\begin{proof}
Assume, for contradiction, that $\tau$ satisfies conditions~(i)--(iii) but $\tau \notin B_s \cup \textsc{FairFIL}_s$.
Let $W \subseteq V_s$ denote the set of validators that observe $\tau$ at least \SI{3}{\second} before the start of slot~$s$; by~(i), $|W| > \frac{|V_s|}{2}$.
Consider any validator $v \in W$ executing Check~4 of \Cref{fairfil:verification}.
Since $\tau \in \pool_s^{v} \subseteq \extpool_s^{v}$ (Check~3), $\tau$ is contained in the local Fair Ordering reconstructed by~$v$.
By~(iii), no higher-tipped transaction displaces $\tau$, so the iteration over $\textsc{LocalFairOrdering}$ reaches~$\tau$; by~(ii), executing~$\tau$ yields a valid state, so~$\tau$ is admitted into $\textsc{FairBlock}_s$.
Since $\tau \in \textsc{FairBlock}_s$ but $\tau \notin B_s \cup \textsc{FairFIL}_s$, validator $v$ detects undisclosed transaction censorship and votes against $B_s$.
This argument applies to every validator in~$W$, so a majority of $V_s$, since $|W| > \frac{|V_s|}{2}$, rejects $B_s$, and block assembler~$\beta_s$ forfeits the entire block reward~$\Rmev{B_s}$.
Compensating this loss would require a bribe exceeding $\Rmev{B_s} \geq \mathcal{M}$, contradicting the budget restriction and thus the assumption $\tau \notin B_s \cup \textsc{FairFIL}_s$.
\end{proof}
 
\begin{property}[Majority observability]
\label{prop:freedom}
The obligation $T_{\mathrm{req}}$ must contain only transactions observed by a majority of committee members.
\end{property}

While Property~2 prevents omission of eligible transactions, a rational assembler may also exploit the obligation by injecting own transactions that were not publicly broadcast beforehand. This undermines accountability, since the construction of $T_{\mathrm{req}}$ is no longer publicly verifiable, and may displace censored transactions. \Cref{tab:censorship-example} illustrates the consequence: placing $\tau_{\mathrm{private}}$ in $T_{\mathrm{req}}$ leaves insufficient capacity for the censored $\tau_4$. 
Property~3 rules out such injections by restricting $T_{\mathrm{req}}$ to transactions observed by a majority of $V_s$.

\begin{lemma}
Under Ethereum's empirical mempool consistency (\Cref{sec:empirical:mempool}), FairFIL satisfies Property~3: every transaction $\tau$ contained in $\textsc{FairFIL}_s$ is observed by a majority of $V_s$; otherwise, a majority of $V_s$ rejects block $B_s$.
\end{lemma}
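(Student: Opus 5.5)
We argue by contradiction, in the same way as \Cref{lemma:prop_listing}. The argument rests on Check~3 of \Cref{fairfil:verification}, with the mempool-consistency results of \Cref{sec:empirical:mempool} fixing how large the per-validator threshold is.

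Suppose $B_s$ is accepted, yet some $\tau \in \textsc{FairFIL}_s$ is observed by at most half of $V_s$. Let $U \subseteq V_s$ be the validators with $\tau \notin \pool_s^{v}$; then $|U| \geq \frac{|V_s|}{2}$.

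\textbf{Step 1: calibrate the threshold.} By \Cref{obs:consistency_merged} and the interpretation following it, the adaptive threshold stays at zero for most validators, because their mempools already contain every publicly propagated, \textsc{FairFIL}-relevant transaction. For the remaining validators it stays close to zero. So for a majority of validators the slack $\mathrm{threshold}(\pool_s^{v})$ is zero in both count and gas. We formalize this by adding an assumption: the set $Z \subseteq V_s$ of validators with zero threshold has $|Z| > \frac{|V_s|}{2}$, and the assembler cannot predict which validators lie in $Z$.

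\textbf{Step 2: count the rejecting validators.} Every $v \in U \cap Z$ increments $\mathrm{UnknownCount}$ at $\tau$, which exceeds its zero threshold, so $v$ votes ``$B_s$ invalid'' in Check~3. What remains is to show that $|U \cap Z|$, together with any other rejecting validators, forms a majority.

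This counting is the main obstacle. Two majorities need not overlap in a majority: $U$ has at least half and $Z$ has more than half, but their intersection alone need not exceed half. I would try two routes.

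The first route works with what $\tau$ can be. A transaction missing from about half the validators' views is, by the empirical observations, not a publicly propagated transaction. Since by \Cref{obs:consistency_merged} public transactions reach essentially all well-connected nodes within \SI{3}{\second}, $\tau$ must in effect be injected XOF. An injected transaction is then absent from almost all of $V_s$, so $U$ is nearly all of $V_s$, and hence $U \cap Z$ is itself a majority.

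The second route, if the first is judged too loose, assumes consistency directly: each public transaction is seen by either more than half or almost none of $V_s$. That assumption gives the bimodal split the first route needs.

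\textbf{Step 3: close via the budget.} With a majority rejecting, $B_s$ is invalid and $\beta_s$ forfeits $\Rmev{B_s}$. This contradicts acceptance, exactly as in the closing of \Cref{lemma:prop_listing}. Overturning the verdict would need a bribe above $\Rmev{B_s} \geq \mathcal{M}$, which the budget restriction excludes.
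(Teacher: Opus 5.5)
Your skeleton (contradiction, Check~3 flagging $\tau$ as unknown against a zero threshold, a rejecting majority, and the budget bound in Step~3) is the paper's. The overlap problem you raise in Step~2 is genuine, and more than a gap in the argument: under your Step~1 hypothesis alone the conclusion can fail. Suppose the assembler lets $\tau$ reach only a random half of $V_s$. Then $|U \cap Z| \approx |Z|/2 < |V_s|/2$. Validators outside $Z$ whose small positive threshold tolerates a single unknown transaction do not reject, so $B_s$ can still collect a majority of accepting votes. Unpredictability of $Z$ does not help against a random $U$.

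Route~1 does not close this. Observations~\ref{obs:consistency} and~\ref{obs:consistency_merged} bound the share of \emph{required public} transactions seen by one monitoring node, or by a merged view of three nodes. They do not say that each public transaction reaches almost every validator, and they say nothing about a transaction the assembler releases selectively or late. The step ``injected, hence absent from almost all of $V_s$'' therefore presupposes the bimodal split of Route~2 rather than deriving it.

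The paper gets past Step~2 only through a stronger modelling choice. After citing \Cref{sec:empirical:mempool} for a zero threshold at a majority, it says ``we assume this throughout''. It then applies the zero threshold to \emph{every} validator lacking $\tau$: all of $V_s$ in the homogeneous case, and all of the non-observing set $W$ otherwise. So the whole non-observing set rejects, and no intersection with $Z$ is ever needed.

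To complete your proof, state such an assumption explicitly. Either take a zero threshold at every validator lacking $\tau$, or adopt Route~2 quantified so that fewer than $|Z| - |V_s|/2$ validators observe any transaction that is not observed by a majority. With the former, you still only have $|U| \geq |V_s|/2$ (the paper's $|W| > |V_s|/2$ overstates this). Finish by noting that the at most $|V_s|/2$ validators outside $U$ cannot supply the majority of accepting votes that block acceptance requires. Your Step~3 then goes through unchanged.
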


\begin{proof}
By \Cref{sec:empirical:mempool}, a majority of validators set the XOF threshold in Check~3 of \Cref{fairfil:verification} to zero; we assume this throughout.
Suppose, for contradiction, that some transaction $\tau \in \textsc{FairFIL}_s$ is not observed by a majority of $V_s$.
If all validators share the same mempool view, then $\tau \notin \pool_s^v$ for every validator $v \in V_s$. Check~3 therefore marks $\tau$ as unknown at every validator, and with threshold zero, every $v \in V_s$ rejects block $B_s$.
If mempools are heterogeneous, then there exists a majority $W \subseteq V_s$ with $|W| > |V_s|/2$ such that $\tau \notin \pool_s^v$ for every $v \in W$. For each $v \in W$, Check~3 marks $\tau$ as unknown, and with threshold zero, $v$ rejects $B_s$.
Hence a majority of $V_s$ rejects $B_s$.
In either case, block assembler $\beta_s$ forfeits the full block reward $\Rmev{B_s} \ge \mathcal{M}$. 
Compensating this loss would require a bribe exceeding $\Rmev{B_s}$, contradicting the budget bound.
\end{proof}

\begin{property}[Rational builder compliance]
\label{prop:compliance}
The block assembler required to include transaction set $\Tau_{\mathrm{req}}$ in block $B_s$ does so if the assembler behaves rationally.
\end{property}
 
Properties~\ref{prop:includability}--\ref{prop:freedom} make the correctness of transaction set $\Tau_{\mathrm{req}}$ publicly verifiable; accountability, however, requires that this obligation is also enforced.
If non-compliance were the rational choice for the obligated block assembler, the assembler would deviate under the rationality model, and the censorship resistance mechanism would fail at enforcement.
Compliance with $\Tau_{\mathrm{req}}$ must therefore constitute the rational choice for the obligated assembler.

\begin{lemma}
\label{lemma:prop_compliance}
\textsc{FairFIL} satisfies Property~\ref{prop:compliance}: a rational (or altruistic) block assembler $\beta_{s+1}$ includes every transaction $\tau \in \textsc{FairFIL}_s$ in block $B_{s+1}$.
Any omission of a listed transaction from~$B_{s+1}$ causes block~$B_{s+1}$ to be rejected.
\end{lemma}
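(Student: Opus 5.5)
We argue in two parts: first the mechanical consequence of an omission, then the rational comparison of the assembler's options.

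For the first part, suppose some $\tau \in \textsc{FairFIL}_s$ is missing from $B_{s+1}$. Check~1 of \Cref{fairfil:verification}, run by each validator $v' \in V_{s+1}$, then returns ``$B_{s+1}$ invalid''. This check needs only $B_{s+1}$ and $\textsc{FairFIL}_s$, which every validator has, and no mempool view. So the outcome is identical at all validators and involves no XOF threshold. Under the standing assumption that the majority votes correctly, a majority rejects $B_{s+1}$, and $\beta_{s+1}$ forfeits the whole block reward $\Rmev{B_{s+1}}$. We also note that the verdict cannot be reversed within budget: doing so would require bribing a majority of $V_{s+1}$, which the budget restriction excludes.

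For the second part, we compare the payoffs of complying and omitting. By \Cref{lemma:prop_includability}, compliance is always feasible. Placing the prefix of $\textsc{FairFIL}_s$ at the start of $B_{s+1}$ keeps the block valid, and the rest of the block is left to the assembler's discretion, since the ordering of $\textsc{FairFIL}_s$ is not mandated inside $B_{s+1}$. Complying therefore yields $\Rmev{B_{s+1}}$, minus whatever MEV or block space the obligation displaces. Omitting yields at most $\mathcal{M}$ from the adversary, because the block is rejected and earns nothing. The assumption $\mathcal{M} \le \min(\Rmev{B_s},\Rmev{B_{s+1}})$ then gives the inequality. Since ties are resolved towards compliance in the behavior model, a rational assembler complies. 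An altruistic assembler complies by definition.

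The main obstacle is the comparison in the second part. The value of $B_{s+1}$ with $\textsc{FairFIL}_s$ forced in may be smaller than the assembler's unconstrained optimum, so the right opportunity cost is the reward of the constrained block, not $\Rmev{B_{s+1}}$ itself. We would define $\Rmev{B_{s+1}}$ as the reward of the best compliant block, so that the budget bound applies to it directly. For extra support, we would cite the empirical fact that $\textsc{FairFIL}_s$ uses only a median of $0.1\%$ of the gas. We also need to rule out a withholding strategy, in which $\beta_{s+1}$ skips the slot; this likewise earns at most $\mathcal{M}$, so the same inequality covers it.
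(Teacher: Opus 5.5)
Your proof is correct and follows the paper's route. Check~1 of \Cref{fairfil:verification} rejects any $B_{s+1}$ that omits a listed transaction, so the assembler forfeits $\Rmev{B_{s+1}}$. By \Cref{lemma:prop_includability} such an omission is deliberate, and a bribe within $\mathcal{M} \le \Rmev{B_{s+1}}$ cannot make it strictly profitable, given that ties go to compliance. Your additions are sound refinements that the paper leaves implicit: Check~1 does not depend on any mempool view, the relevant opportunity cost is the reward of the best compliant block, and withholding the block is covered by the same inequality.
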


\begin{proof}
Assume, for contradiction, that a rational block assembler $\beta_{s+1}$ does not include some transaction $\tau \in \textsc{FairFIL}_s$ in block $B_{s+1}$.
By \Cref{lemma:prop_includability}, the omission of~$\tau$ is a deliberate choice by $\beta_{s+1}$, not a consequence of infeasibility.
At slot~$s{+}1$, every validator $v \in V_{s+1}$ runs Check~1 of~\Cref{fairfil:verification}, which verifies that every transaction in $\textsc{FairFIL}_s$ appears in block~$B_{s+1}$.
Since, by assumption, $\tau \notin B_{s+1}$, Check~1 fails for every $v \in V_{s+1}$; the block is rejected and $\beta_{s+1}$ forfeits the entire block reward~$\Rmev{B_{s+1}}$.
Compensating this loss would require a bribe exceeding $\Rmev{B_{s+1}} \geq \mathcal{M}$, contradicting the budget restriction.
Any lower bribe would contradict the rationality assumption of~$\beta_{s+1}$.
\end{proof}

The preceding lemmas establish that any protocol-deviating censorship attempt -- whether by omitting a required transaction from $\textsc{FairFIL}_s$, injecting unauthorized transactions into $\textsc{FairFIL}_s$, or failing to enforce the resulting inclusion obligation -- causes block rejection at the cost of $\Rmev{B_s}$ (construction failure) or $\Rmev{B_{s+1}}$ (enforcement failure).
\begin{lemma}
\label{lemma:fairfil_cr_protocol}
If censorship of transaction~$\tau$ for more than one slot requires a deviation from \textsc{FairFIL}, then \textsc{FairFIL} is $\bigl(\min(\Rmev{B_s},\, \Rmev{B_{s+1}}) + \cbribe,\; 1\bigr)$-censorship-resistant.
Equivalently, by deviating from \textsc{FairFIL}, two-slot censorship of transaction~$\tau$ requires invalidating at least one of the blocks $B_s$ or $B_{s+1}$.
\end{lemma}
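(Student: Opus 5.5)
The proof is a case analysis over how a two-slot censorship of $\tau$ by deviating from \textsc{FairFIL} can arise, combined with the four accountability lemmas. Fix the earliest slot $s$ in which $\tau$ is includable, and suppose $\mathcal{A}$ achieves $2$-slot censorship, so $\tau \notin B_s$ and $\tau \notin B_{s+1}$. The premise of the lemma says this requires a deviation from \textsc{FairFIL}. There are then exactly two points at which the deviation can occur.

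\emph{Construction.} Here $\tau \notin \textsc{FairFIL}_s$, although $\tau$ meets conditions (i)--(iii) of \Cref{lemma:prop_listing}. A variant of this case is that $\tau$ is displaced from $\textsc{FairFIL}_s$ by injected, non-public transactions. The first situation is handled by \Cref{lemma:prop_listing} and the second by the Property~\ref{prop:freedom} lemma; in both, a majority of $V_s$ rejects $B_s$.

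\emph{Enforcement.} Here $\tau \in \textsc{FairFIL}_s$ but $\tau \notin B_{s+1}$. \Cref{lemma:prop_includability} shows the omission is not forced by infeasibility. \Cref{lemma:prop_compliance} (Check~1) then shows that every validator of $V_{s+1}$ rejects $B_{s+1}$.

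In either branch, one of $B_s$ or $B_{s+1}$ is invalidated, which proves the ``equivalently'' formulation. I would also state explicitly that the case split is exhaustive. If $\tau$ is listed in $\textsc{FairFIL}_s$ and $\beta_{s+1}$ complies, then $\tau \in B_{s+1}$. If $\tau$ is not listed and no check fails, then one of (i)--(iii) does not hold, and censorship is then not attributable to a deviation from \textsc{FairFIL}, which is excluded by the hypothesis.

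Next I translate block invalidation into a cost for $\mathcal{A}$. The assembler whose block would be rejected is rational, so by the behavior model it deviates only if the deviation yields a strictly higher payoff than compliance. Deviating costs it the full reward, $\Rmev{B_s}$ in the first case and $\Rmev{B_{s+1}}$ in the second. Hence $\mathcal{A}$ must pay a bribe strictly exceeding that reward, and the payment incurs the on-chain overhead $\cbribe$. $\mathcal{A}$ can choose the cheaper branch, so the minimal cost of two-slot censorship is just above $\min(\Rmev{B_s},\Rmev{B_{s+1}})+\cbribe$.

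Bribing a majority of $V_s$ or $V_{s+1}$ to overturn the verdict is excluded by the adversary model, so no cheaper route exists. Consequently, an adversary with per-slot budget $\mathcal{M} = \min(\Rmev{B_s},\Rmev{B_{s+1}}) + \cbribe$ cannot force the strictly larger payment. The transaction $\tau$ is therefore included in $B_{s+1}$ at the latest, which by \Cref{def:cr} is $(\mathcal{M},1)$-censorship resistance.

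The step I expect to be delicate is the boundary between ``at most'' and ``strictly more''. The definition of rationality needs a strict gain, while the budget bound is stated as $\mathcal{M} \le \Rmev{\cdot}$. I would handle this the same way as \Cref{lemma:EthereumCR}: the required bribe is the forfeited reward plus $\epsilon$ plus $\cbribe$, and this quantity exceeds the stated budget. A second concern is that the budget is per slot, not cumulative: $\mathcal{A}$ could spread payments over slots $s$ and $s+1$. This is not a problem, because each branch hinges on a single block's reward, paid to a single assembler within a single slot. I would state this explicitly so that splitting payments across slots is seen to give no advantage.
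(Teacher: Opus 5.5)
Your proof is correct and follows the paper's own argument. Both split two-slot censorship into a construction failure, where a majority of $V_s$ rejects $B_s$ by \Cref{lemma:prop_listing}, and an enforcement failure, where Check~1 rejects $B_{s+1}$ by \Cref{lemma:prop_compliance}; both then note that with per-slot budget $\min(\Rmev{B_s},\Rmev{B_{s+1}})+\cbribe$ the net bribe at best leaves the lower-reward assembler indifferent, so it complies, while the other assembler strictly prefers compliance. The paper additionally exhibits the matching attack (the baseline bribe $\Bbase_{\tau}$ in slot $s$, then $\Rmev{B_{s+1}}+\cbribe+\epsilon$ to $\beta_{s+1}$) to show the bound is tight, which you only assert in passing; your explicit handling of injections via Property~\ref{prop:freedom} and of exhaustiveness is a harmless addition.
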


The proof is given in Appendix~\ref{appendix:proof_protocol}. 
Let us now focus on protocol-compliant censoring, i.e., attacks that comply with \textsc{FairFIL} but try to suppress a transaction by other means.
\label{sec:suppression}
In a \emph{transaction suppression attack}, adversary~$\mathcal{A}$ acts as an ordinary user and propagates high-tip filler transactions through the network to displace $\tau$ below the gas-limit cutoff of the Fair Ordering.
We show that, for nearly all transactions, this attack is more expensive than block invalidation.

\vspace{-1mm}
\subparagraph{Intuition.}
Acting as an ordinary user, adversary $\mathcal{A}$ propagates transactions through the network such that they appear in regular mempools, where \textsc{FairFIL} treats them equivalently to any other transaction.
By propagating a filler transaction $\tau'$ ranked strictly above $\tau$ under Fair Ordering -- for simplicity, we assume a single $\tau'$ suffices and disregard the per-transaction gas cap~\cite{EIP7825} -- $\mathcal{A}$ displaces transaction $\tau$ below the block's gas limit $\Glimit(B_{s})$.
Transaction $\tau$ is consequently not considered censored and therefore not included in $\textsc{FairFIL}_s$; block assembler $\beta_{s+1}$ is thus not obligated to include $\tau$ in block~$B_{s+1}$, allowing $\tau$ to be censored at low cost in~$B_{s+1}$.
The attack is rule-conforming: \textsc{FairFIL} is verified and enforced as specified.
In the following, we derive the $(\mathcal{M}, \sigma)$-censorship resistance that \textsc{FairFIL} achieves against this attack.

\vspace{-1mm}
\subparagraph{Formalization.}
We write $\tau' \succ \tau$ if transaction $\tau'$ ranks strictly above transaction $\tau$ under Fair Ordering, that is, if $\ftipgas{\tau'} > \ftipgas{\tau}$, or if $\ftipgas{\tau'} = \ftipgas{\tau}$ and $\mathrm{hash}(\tau') < \mathrm{hash}(\tau)$.
Let $G_{\mathrm{above}}(\tau, s)$ be the cumulative gas of all mempool transactions ranked above transaction~$\tau$ at slot~$s$, and define the residual gas capacity $\Delta(\tau, s)$ above $\tau$ as the largest amount of additional gas that can be inserted above $\tau$ in Fair Ordering while $\tau$ still fits within block $B_{s}$:
\[
\Delta(\tau, s) \;=\; \max\bigl(0,\; \Glimit(B_{s}) - G_{\mathrm{above}}(\tau, s) - \mathrm{gas}(\tau)\bigr).
\]
\noindent To suppress $\tau$, adversary $\mathcal{A}$ must inject filler gas strictly exceeding $\Delta(\tau, s)$ in transactions satisfying $\tau' \succ \tau$, at a per-gas cost of at least $f^*(\tau) := \ftipgas{\tau} + \fbasegas{\tau'}$.
Matching $\ftipgas{\tau}$ rather than strictly exceeding it suffices, since $\mathcal{A}$ can outrank $\tau$ via the hash tie-breaker.
 
\begin{lemma}[Suppression cost]
\label{lemma:fairfil_cr_baseline}
\textsc{FairFIL} is at most $\bigl(f^*(\tau) \cdot \Delta(\tau, s),\; 1\bigr)$-censorship-resistant for any transaction $\tau$, propagated before slot $s$.
\end{lemma}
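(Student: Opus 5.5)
We prove the statement by exhibiting a concrete protocol-compliant attack. The phrase ``at most'' means we must show an adversary with per-slot budget on the order of $f^*(\tau)\cdot\Delta(\tau,s)$ can censor $\tau$ for more than one slot. It suffices to give a strategy whose slot-$s$ cost stays within that budget and to confirm that every party behaves rationally and compliantly under \textsc{FairFIL}.

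The construction runs as follows. Before the start of slot~$s$ (at least \SI{3}{\second} earlier, so the transaction is in every validator's \textsc{LocalFairOrdering} window), $\mathcal{A}$ publicly propagates a filler transaction $\tau'$ with $\ftipgas{\tau'} = \ftipgas{\tau}$. We choose its hash so that $\tau' \succ \tau$, which we can arrange by grinding the nonce or calldata. We set $\mathrm{gas}(\tau') > \Delta(\tau,s)$, with the extra amount being an arbitrarily small increment. By the definition of $\Delta(\tau,s)$, once $\tau'$ sits above $\tau$ the cumulative gas $G_{\mathrm{above}}(\tau,s)+\mathrm{gas}(\tau')+\mathrm{gas}(\tau)$ exceeds $\Glimit(B_s)$. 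The Fair Ordering iteration in \Cref{fairfil:builder} (lines 9--17) and in Check~4 of \Cref{fairfil:verification} therefore no longer admits $\tau$. As a result, $\tau \notin \textsc{Censored}$, $\tau \notin \textsc{FairBlock}_s$, and a compliant $\beta_s$ may exclude $\tau$ from $B_s$ without listing it in $\textsc{FairFIL}_s$. Because $\tau'$ was publicly propagated, it passes Check~3 and Property~\ref{prop:freedom}.

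We then account for the cost. Each unit of filler gas, if included, costs $\mathcal{A}$ at most the tip plus the base fee, that is $f^*(\tau)$ per gas. The total is therefore $f^*(\tau)\cdot\Delta(\tau,s)$, up to the infinitesimal excess, which is absorbed exactly as the $\epsilon$ in \Cref{lemma:EthereumCR}. If $\tau'$ is instead itself censored or listed, $\mathcal{A}$ pays nothing more. In slot $s+1$, no \textsc{FairFIL} obligation forces $\tau$, so excluding $\tau$ from $B_{s+1}$ again reduces to the baseline bribe of \Cref{lemma:EthereumCR}. Alternatively, $\mathcal{A}$ can repeat the filler attack against $\Delta(\tau,s+1)$. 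Either way, two-slot censorship is reached, and the construction shows that the censorship-resistance parameter cannot exceed this budget.

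The main obstacle is a subtlety in the Fair Ordering iteration: it skips non-fitting transactions rather than stopping at them. One danger is that $\tau'$ itself fails to fit and is skipped, so that $\tau$ is admitted after all. This is avoided by sizing $\mathrm{gas}(\tau')$ so that $\tau'$ fits, which requires $G_{\mathrm{above}}+\mathrm{gas}(\tau')\le\Glimit$ while the cumulative gas including $\tau$ still exceeds it. Such a choice exists whenever $\mathrm{gas}(\tau')\in(\Delta,\Delta+\mathrm{gas}(\tau)]$. We also have to check that executing $\tau'$ is valid on the intermediate state, which $\mathcal{A}$ controls by choosing a simple transfer. Finally, we rely on the paper's simplification that a single $\tau'$ suffices and that the per-transaction gas cap is ignored.
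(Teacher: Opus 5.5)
Your attack is the one the paper uses: a publicly propagated filler $\tau'$ outranks $\tau$ via the hash tie-break and consumes more than $\Delta(\tau,s)$ gas, and a baseline bribe follows in slot $s+1$. Your requirement that $\mathrm{gas}(\tau')$ lie in $(\Delta(\tau,s),\,\Delta(\tau,s)+\mathrm{gas}(\tau)]$, so that $\tau'$ is not itself skipped, is a correct refinement that the paper leaves implicit. Omitting the converse (budget $f^*(\tau)\cdot\Delta(\tau,s)$ alone does not suffice) is fine for an ``at most'' claim. The gap is in the per-slot budget accounting, which is what $(\mathcal{M},\sigma)$-censorship resistance measures.

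Your slot-$(s+1)$ expenditure is $\Bbase_{\tau}=\ftiptx{\tau}+\cbribe+\epsilon$, and nothing in your argument bounds it by $f^*(\tau)\cdot\Delta(\tau,s)+\epsilon$. In general it is not bounded this way. If $\Delta(\tau,s)=0$, the claimed budget is essentially $\epsilon$, yet slot $s+1$ still costs at least $\cbribe$. The paper closes this with an explicit extra hypothesis, $\Bbase_{\tau}<f^*(\tau)\cdot\Delta(\tau,s)$, justified by blocks being only about half full so that $\Delta(\tau,s)$ is large; you need the same assumption. Re-running the filler in slot $s+1$ does not rescue you, since $f^*(\tau)\cdot\Delta(\tau,s+1)$ need not be bounded by $f^*(\tau)\cdot\Delta(\tau,s)$.

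You also only show that a compliant $\beta_s$ \emph{may} omit $\tau$ without listing it. Two-slot censorship needs $\tau\notin B_s$ without a further slot-$s$ bribe. For that you must argue, as the paper does, that a rational $\beta_s$ actually includes $\tau'$ in $B_s$, so that $B_s$ is filled up to $\tau$. The reason is that $\tau'$ pays per-gas tip $\ftipgas{\tau}$; the paper additionally assumes $\epsilon$ is large enough that replacing $\tau'$ by lower-tipped transactions cannot raise revenue.

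Your fallback ``if $\tau'$ is itself censored or listed, $\mathcal{A}$ pays nothing more'' fails on exactly this point, in two ways:
\begin{itemize}
\item If $\beta_s$ omits $\tau'$, the remaining capacity $\Glimit(B_s)-G_{\mathrm{above}}(\tau,s)=\Delta(\tau,s)+\mathrm{gas}(\tau)$ has room for $\tau$. A rational $\beta_s$ would then include $\tau$ for its tip unless bribed again at cost $\Bbase_{\tau}$ in slot $s$.
\item If $\tau'$ lands in $\textsc{FairFIL}_s$, its fee is charged in slot $s+1$ on top of $\Bbase_{\tau}$, overshooting the per-slot budget.
\end{itemize}
Replace this case split with the rationality argument that $\beta_s$ includes $\tau'$.
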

 
\begin{proof}[Proof Sketch]
An adversary who wants to exclude transaction $\tau$ for more than a single slot must first prevent $\tau$ from qualifying as censored in slot~$s$ -- otherwise, $\beta_{s+1}$ would be obligated to include $\tau$ via $\textsc{FairFIL}_s$.
This requires injecting a publicly broadcast filler transaction $\tau'$ whose gas usage exceeds the residual gas capacity of block~$B_s$, at a per-gas cost of at least the total per-gas transaction fee of $\tau$, so that $\tau$ no longer fits under Fair Ordering and is not classified as censored.
With no inclusion obligation in $B_{s+1}$, the adversary can then exclude $\tau$ from $B_{s+1}$ at the Ethereum baseline cost~$\Bbase_\tau$.
The full proof is given in Appendix~\ref{appendix:proof_suppression}.
\end{proof}

Unlike prior mechanisms, the per-slot cost of censorship under \textsc{FairFIL} is not uniform, but can vary across slots.
We derive this fact as follows.
At slot~$s$, no inclusion obligation applies to transaction~$\tau$, so censoring $\tau$ from block $B_s$ costs only the Ethereum baseline~$\Bbase_{\tau}$.
To extend censorship beyond a single slot, $\mathcal{A}$ must either prevent~$\tau$ from entering $\textsc{FairFIL}_s$ via the suppression attack or invalidate one of the affected blocks.
For the transaction suppression attack, $\mathcal{A}$ propagates a filler transaction~$\tau'$ in the same slot as $\tau$, w.l.o.g.\ slot~$s$, such that $\tau'$ is included by block assembler $\beta_s$ in block $B_s$.
The block $B_s$ is thereby fully utilized, leaving $\tau$ neither included nor classified as censored, and thus excluded from $\textsc{FairFIL}_s$.
Block assembler $\beta_{s+1}$ is therefore not obligated to include $\tau$ in block~$B_{s+1}$, allowing adversary $\mathcal{A}$ to achieve censorship of $B_{s+1}$ by paying $\Bbase_{\tau}$.
Under this payment, $\beta_{s+1}$ is then required to include $\tau$ in $\textsc{FairFIL}_{s+1}$, so $\mathcal{A}$ must again prevent inclusion of $\tau$ in $\textsc{FairFIL}_{s+1}$.
This coincides with the displacement required for $\textsc{FairFIL}_s$ and is necessary in every slot in which $\tau$ is to be censored for at least two blocks.
Thus, the first and last slot in which $\tau$ is censored are comparatively cheap, while all intermediate slots require repeated displacement from the preceding FairFIL. The per-slot cost is therefore

\vspace{1mm}
\(
\mathcal{C}_{\tau} \;=\; \bigl[\Bbase_{\tau},\; f^*(\tau) \cdot \Delta(\tau, s),\; f^*(\tau) \cdot \Delta(\tau, s + 1), \dots, \Bbase_{\tau} \bigr].
\)
\vspace{1mm}

\noindent We show in \Cref{censorship:cost} that, in late 2025, most Ethereum transactions already pay tips sufficient for suppression to cost nearly twice as much as block invalidation; even the mandatory base fee alone exceeds the block invalidation cost.

For block invalidation, once a block~$B_s$ is invalidated, the subsequent block assembler $\beta_{s+1}$ carries no inclusion obligation, so $\tau$ can be censored from block~$B_{s+1}$ at Ethereum baseline cost~$\Bbase_\tau$.
However, the censored transaction must then appear in $\textsc{FairFIL}_{s+1}$, requiring another block invalidation to sustain censorship in block~$B_{s+2}$. This yields an alternating cost pattern between baseline censorship costs and block invalidation.
The resulting per-slot cost pattern therefore takes one of two forms, or any combination of the underlined parts,
\[
\mathcal{C}_{\tau} \;=\; \bigl[\,
\underline{\Bbase_{\tau},\; \Rmev{B_{s+1}} \,+\, \epsilon \,+\, \cbribe},
\, \dots\,\bigr]
\quad\text{or}\quad
\mathcal{C}_{\tau} \;=\; \bigl[\,
\underline{\Rmev{B_s} + \,\epsilon\, + \,\cbribe\,,\; \Bbase_{\tau}},
\, \dots\,\bigr],
\]
depending on whether adversary $\mathcal{A}$ invalidates $B_{s+1}$ (enforcement failure) or $B_s$ (construction failure).
Together, these results determine the $(\mathcal{M},1)$-censorship resistance of \textsc{FairFIL}.

\subsection{Cost of Censorship}
\label{censorship:cost}
We now instantiate the bounds of \Cref{lemma:fairfil_cr_baseline,lemma:fairfil_cr_protocol} with empirical Ethereum data to estimate the actual costs of censorship under \textsc{FairFIL}. 
\Cref{cr:cost_all} in \Cref{sec:relatedwork} summarizes all results, including surveyed prior mechanisms with the instantiations described below.

\subparagraph{Empirical Baseline Parameters.}
\label{sec:empirical_baseline}
 
To determine the median transaction parameters, we use data from \num{194} million transactions in blocks \num{23264566}--\num{24136052} (Sep.--Dec.~2025).
\Cref{fig:eth:sota} shows costs for transaction tips, transaction fees, and block tips at a reference price of \num{2500}~\euro\ per Ether, separated into transactions propagated publicly and all transactions within the block, as well as the per-block fraction of transactions not observed in our mempool dataset.
A public transaction in this period consumes a median of \SI{40400}{gas}, pays a median per-transaction tip of $\ftiptx{\tau} \approx \num{0.01}~\text{\euro}$ to the block assembler, and incurs a total fee of approximately $\num{0.05}~\text{\euro}$ paid by the user.
A block assembler collects a median of $\Rpub{B_s} \approx \num{9.09}~\text{\euro}$ in tips per block from public transactions.
At the median, the included public transactions require approximately \SI{7.9e6}{gas}, far below the block gas limit of \SIrange{45e6}{60e6}{gas} in this period (\SI{45e6}{gas} in September 2025).
Private transactions ($\approx$~\SI{44}{\percent} of the median block's transactions) raise the assembler's revenue by a factor of~\num{3.1}, yielding a median block reward of $\Rmev{B_s} \approx \num{28.11}~\text{\euro}$.
Based on these data and corresponding to our bribing contract implementation, we set $\cbribe = \num{0.08}~\text{\euro}$ ($\approx$ \num{85000} gas at base fees)\footnote{We do not implement a fully trustless bribing setup; instead, we rely on the correct operation of an oracle. The reported \num{85000} gas correspond to the average per-participant cost of bribing 16 participants, with significant variation depending on the number of participants. The average cost per bribe increases for fewer than 16 participants and decreases for larger sets of participants.
} and $\epsilon = \num{0.01}~\text{\euro}$.
The chosen deviation incentive $\epsilon$ is small in absolute terms and, in our view, only meaningful as an incentive for actors that accumulate bribes across many slots, such as operators of many validators~\cite{ethereum-saas}.
We nevertheless adopt this conservative value to ensure that the protocol resists any short-term rational deviations.

\begin{figure*}[tb]
  \centering
  \begin{subfigure}[b]{0.24\linewidth}
    \centering
    \includegraphics[width=\textwidth]{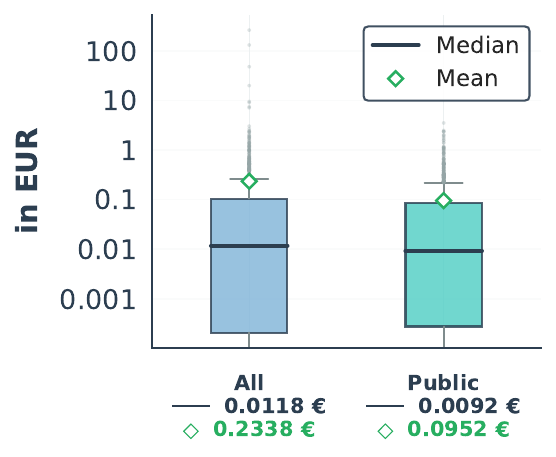}
    \captionsetup{justification=centering}
    \caption{Transaction tip}
  \end{subfigure}
  \hfill
  \begin{subfigure}[b]{0.24\linewidth}
    \centering
    \includegraphics[width=\textwidth]{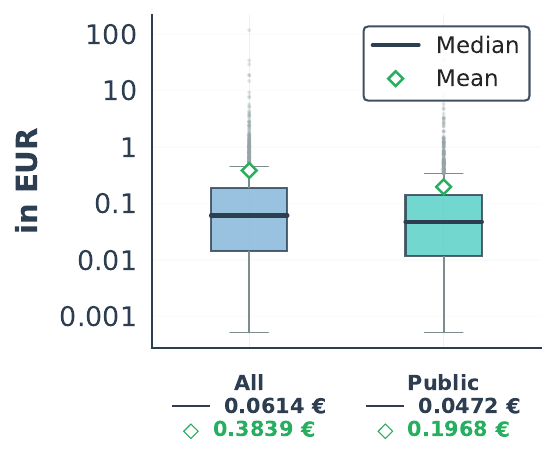}
    \captionsetup{justification=centering}
    \caption{Transaction fee}
  \end{subfigure}
  \hfill
  \begin{subfigure}[b]{0.24\linewidth}
    \centering
    \includegraphics[width=\textwidth]{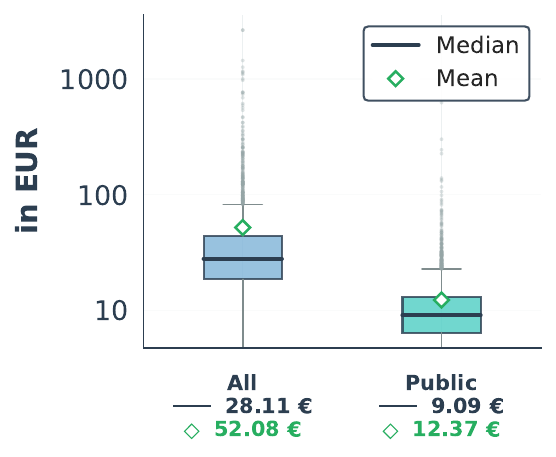}
    \captionsetup{justification=centering}
    \caption{Block tip}
  \end{subfigure}
  \hfill
  \begin{subfigure}[b]{0.24\linewidth}
    \centering
    \includegraphics[width=\textwidth]{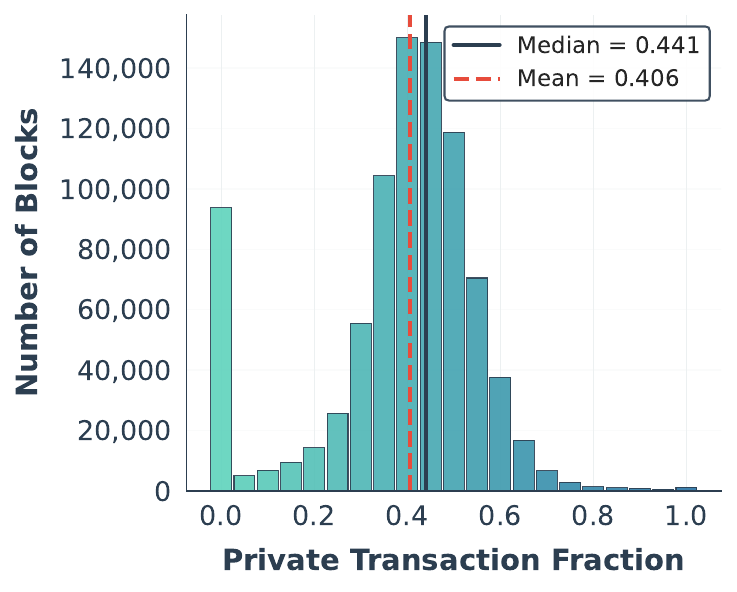}
    \vspace{-2.5mm}
    \captionsetup{justification=centering}
    \caption{\% of XOF per block}
  \end{subfigure}
  \caption{Historical transaction and block data ($n = \num{194051211}$ transactions in blocks \num{23264566}--\num{24136052}, Sep.--Dec.~2025) from the Ethereum mainnet. (a)~the transaction tip ($\ftiptx{\tau}$) paid to the assembler (in~\euro); (b)~the total transaction fee ($\ftiptx{\tau}$ + $\fbasetx{\tau}$) paid by the user (in~\euro); (c)~the block reward from tips (in~\euro); (d)~the per-block fraction of transactions that had not been observed in our mempool dataset~\cite{mempool_guru}. We use a constant exchange rate of \num{2500}~\euro\ per~Ether.}
  \label{fig:eth:sota}
\end{figure*}

\subparagraph{Quantifying Censorship Costs.}
To censor a transaction $\tau$, adversary $\mathcal{A}$ must either invalidate a block (\Cref{lemma:fairfil_cr_protocol}) or suppress the transaction (\Cref{lemma:fairfil_cr_baseline}).
Empirically, invalidating a block results in a loss of $\Rmev{B_s} \approx \num{28.11}~\text{\euro}$; accepting a bribe is therefore the rational decision for a payment of $\approx \num{28.20}~\text{\euro}$.
In practice, this value is likely higher, since a block assembler typically extracts profitable MEV beyond the tips.
To determine the cost of displacing $\tau$ from the residual gas capacity in Fair Ordering, we consider $\tau$ at the lowest rank under Fair Ordering and quantify the cost of suppressing such a transaction.
Empirically, the adversary needs a median of $\approx \num{37100000}$~gas across all observed blocks to suppress~$\tau$.
These are tip-less transactions; at $f^*(\tau) \approx \frac{\num{0.038}~\text{\euro}}{\num{40400}\,\mathrm{gas}}$, filling this capacity costs approximately $\num{34.9}~\text{\euro}$.
The same calculation for a transaction with a median tip yields $\num{43000000}$~gas and a cost of $\num{50.24}~\text{\euro}$.
Suppression is therefore, in most cases, more expensive than block invalidation.

The corresponding costs for $n$-slot censorship, along with a comparison to prior proposals, are reported in \Cref{cr:cost_all}.
Note that whereas prior proposals prevent censorship in the earliest possible slot, censoring a transaction under \textsc{FairFIL} remains cheap in the first slot, with substantially higher costs arising at alternating slots thereafter.
For example, to censor a transaction for two minutes (ten slots), Ethereum's upcoming censorship resistance mechanism~\cite{ethereum_hegota_2026}, FOCIL, requires ten times the per-slot bribery cost (total median: $\num{15.40}~\text{\euro}$), whereas \textsc{FairFIL} requires five block invalidations interleaved with five baseline Ethereum censorship bribes (total median: $\num{141}~\text{\euro}$).

\section{Discussion}
\label{sec:discussion}

\subsection{On Fair Ordering and $\Pi_1$ Mechanisms}
\label{discussion:pi1}
\textsc{FairFIL} uses a tip-based reference ordering over publicly observed transactions rather than a receive-time-based fairness notion such as the Aequitas family~\cite{kelkar_orderfairness_2020}.
Imposing such an ordering on transactions -- together with preventing private communication channels -- would arguably come closer to the ideal of a fair smart-contract platform.
In Ethereum, however, the effects of such constraints on MEV extraction and participant behavior remain, to the best of our knowledge, insufficiently understood.
This uncertainty matters because private transactions play a major role in practice.
In our measurement period from September to December~2025, they consumed \num{2.86}$\times$ as much gas as public transactions.
Consistent with prior work~\cite{heimbach_ethereums_2023,yang_decentralization_2025}, this indicates extensive MEV extraction and suggests that private transactions may displace public ones when blocks are fully utilized.
These observations motivate our choice of a $\Pi_1$ mechanism, although \textsc{FairFIL} could also be instantiated as a $\Pi_0$ mechanism with minor timing adjustments.
A $\Pi_0$ mechanism would constrain the current block assembler with respect to all transactions includable in the current slot.
\Cref{tab:censorship-example} illustrates the tension by example: enforcing the Fair Ordering over public transactions in full would leave insufficient capacity for the privately submitted transaction \(\tau_{\text{private}}\).
Relaxing to $\Pi_1$ does not fully eliminate this concern, as a large $\textsc{FairFIL}_s$ may still burden the next assembler.
Empirically, however, this obligation is much smaller than what a $\Pi_0$ mechanism would impose on the current block.
Even at the 95th percentile, the inclusion obligation induced by $\textsc{FairFIL}_s$ occupies only about one seventh of the subsequent block's gas capacity, whereas a $\Pi_0$ design would consume roughly half of the current block's capacity.
Given the gas-consumption skew toward private transactions, an accountable $\Pi_0$ mechanism would therefore already constrain MEV extraction -- an effect that warrants further study.
We therefore argue that \textsc{FairFIL}'s guarantee -- either time-bounded transaction inclusion or censorship costs at the level of one invalidated block -- meaningfully improves network reliability while remaining compatible with Ethereum's current incentive model and slot anatomy.

\subsection{Absence of Altruism}
\label{discussion:altruism}
Our behavior model assumes the absence of altruistic participants.
However, prior work~\cite{ethereum_altruistic_proposers_2026} has shown their existence and estimates their prevalence in early~2025 to be at most \SI{1.55}{\percent} of validators.
Relative to the current state of Ethereum, this implies that in roughly every 67th block, an adversarial bribe fails, even if accepting the bribe would be the rational decision for the block assembler.
In particular, in an uncongested network, this yields an effective censorship delay of 13.4 minutes, which we consider insufficient for time-critical applications.
For construction-based mechanisms (cf.~\Cref{sec:relatedwork}), this further implies that in committees of size $\kappa = 128$ under a one-of-$\kappa$-honest assumption, it is likely that at least one altruistic validator is present.
However, at such scales, the influence of an individual committee member becomes small, and even then inclusion of a censored transaction cannot be guaranteed.
As a result, users have no guarantee that their transaction will either be included or that exclusion requires a high-priced bribe by an adversary.
Like AUCIL, \textsc{FairFIL} does not rely on altruistic behavior, but instead assumes predominantly rational participants while maximizing the bribe required for successful censorship.

\section{Conclusion}
\label{sec:conclusion}
Accountability for exclusion decisions -- requiring every block assembler to publicly disclose all censored transactions in a verifiable and attributable manner -- can substantially raise the cost of censoring a transaction in Ethereum.
We established this claim through three contributions.
First, we formalized accountability via four properties whose joint satisfaction anchors censorship costs to the full block reward, rather than to the per-transaction tip or committee size as in prior approaches.
Empirically, this yields an order-of-magnitude increase relative to prior work such as \textsc{FOCIL}, Ethereum's upcoming censorship resistance mechanism.
Second, we presented \textsc{FairFIL} as an accountable $\Pi_1$ mechanism and proved that, under a fully rational participant model, a deviation from \textsc{FairFIL} either causes block invalidation or incurs costs of comparable magnitude.
Further, \textsc{FairFIL} requires no changes to Ethereum's existing incentive model.
Third, we implemented \textsc{FairFIL} and empirically evaluated it on \num{214600} Ethereum mainnet blocks: mempool consistency across validators appears to suffice for \textsc{FairFIL} operation, two-slot censorship costs rise from \num{3.08}~\text{\euro} (\textsc{FOCIL}) to \num{28.30}~\text{\euro} (\textsc{FairFIL}), and the median inclusion obligation imposed on the subsequent block remains small, leaving the assembler's MEV-extraction freedom largely intact.
\newpage
\bibliography{lipics-v2021-sample-article}

\appendix

\section{Notation and Abbreviations}
 
\begin{table}[!ht]
\centering
\small
\begin{tabular}{l|l}
Symbol / Abbreviation & Meaning \\
\hline
\multicolumn{2}{l}{Entities and sets} \\
$V$, $V_s$ & validator set; slot-$s$ committee \\
$v_s^{P}$ & proposer of slot $s$ ($v_s^{P} \in V_s$) \\
$\beta_s$ & block assembler of slot $s$ \\
$B_s$ & block produced in slot $s$ by $\beta_s$\\
$\mathrm{state}(B_s)$ & execution state after executing $B_s$ \\
$\pool_s^{n}$ & mempool view of node $n$ at slot $s$ \\
$\extpool_s^{v}$ & validator $v$'s extended mempool (\Cref{fairfil:verification}, Check~3) \\
\hline
\multicolumn{2}{l}{Transactions and fees} \\
$\tau$ & transaction \\
$\mathrm{gas}(\tau)$ & gas consumption of $\tau$ \\
$\ftipgas{\tau}$ & per-gas tip of $\tau$, paid to block assembler $\beta_s$ \\
$\ftiptx{\tau}$ & per-transaction tip, $\ftiptx{\tau} := \ftipgas{\tau} \cdot \mathrm{gas}(\tau)$ \\
$\fbasegas{\tau}$ & per-gas base fee of $\tau$ (burned) \\
$\fbasetx{\tau}$ & per-transaction base fee, $\fbasetx{\tau} := \fbasegas{\tau} \cdot \mathrm{gas}(\tau)$ \\
$\cbribe$ & on-chain bribing-contract overhead (cost)\\
$\epsilon$ & protocol deviation incentive (cost)\\
\hline
\multicolumn{2}{l}{Block and reward quantities} \\
$\Glimit(B_s)$ & gas limit of block $B_s$ \\
$\Rmev{B_s}$ & total block reward of $B_s$ (transaction tips + MEV) \\
$\Rpub{B_s}$ & tip revenue from public transactions in $B_s$ \\
$\Delta(\tau, s)$ & residual gas capacity above $\tau$ at slot $s$ \\
\hline
\multicolumn{2}{l}{Censorship (Resistance) notions} \\
$\Bbase_{\tau}$ & current per-slot cost of excluding $\tau$\\
$\mathcal{A}$ & adversary aiming to censor $\tau$ \\
$\mathcal{M}$ & adversary's per-slot bribing budget \\
$(\mathcal{M},\sigma)$-CR & censorship resistance notion (\Cref{def:cr}) \\
$\Pi_n$ & CR mechanism for $n$-slot censorship mitigation (\Cref{def:cr_mechanism}) \\
$\kappa$ & committee size of a $\Pi_n$ mechanism \\
$\Tau_{\mathrm{req}}$ & transactions required by $\Pi_n$ in a specified block \\
$\textsc{FairFIL}_s$ & inclusion list / \textsc{FairFIL} of slot $s$ \\
$\textsc{FairBlock}_s$ & local censorship-free block of a validator \\
\end{tabular}
\vspace{2mm}
\caption{Symbols and abbreviations used throughout this paper.}
\label{tab:notation}
\end{table}

\section{Mempool Consistency}
\label{appendix:mempool_consistency}
\Cref{fig:mempool_consistency} reports per-node mempool consistency relative to the censorship-free blocks constructed in~\Cref{sec:empirical}.
To measure mempool consistency, we proceed as follows.
For every censorship-free block, we consider the set of transactions contained therein
and record the earliest first-observation timestamp of each transaction across all monitors.
We then determine, for each monitor, the fraction of these transactions observed within \SI{3}{\second} of that timestamp.
Each data point on the vertical axis therefore corresponds to a single censorship-free block and quantifies the share of the required transaction set that was timely available at the respective monitor.
The horizontal axis distinguishes the three publicly available monitors of~\cite{mempool_guru} in Dallas~(US), Frankfurt am Main~(GER), and Oregon~(US).
In the median, every monitor observes all transactions required to reconstruct a censorship-free block.
Deviations occur but are confined to a small share of blocks and remain quantitatively small; we characterize their magnitude via the 1st and 2nd percentile of the per-node distribution.
For instance, for Frankfurt am Main, P2~=~\SI{98.55}{\percent} indicates that in \SI{98}{\percent} of the observed censorship-free blocks, no more than \SI{1.45}{\percent} of required transactions were missing within the three-second time window.
The lower P1 of \SI{89.36}{\percent} is likely caused by a short-lived monitoring outage, as the missed transactions cluster across several consecutive blocks.
Dallas and Oregon exhibit the same pattern at slightly higher percentile values.

\vspace{15mm}
 \begin{figure}[!ht]
    \centering
    \includegraphics[width=0.64\linewidth]{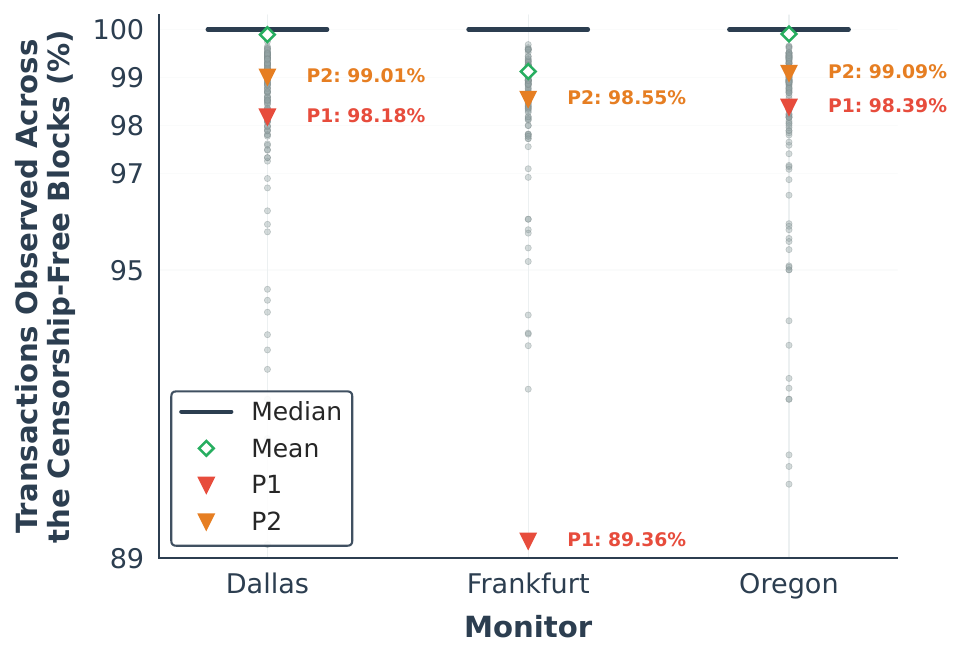}
    \caption{Per-node mempool consistency during September 2025 at the three monitors
of~\cite{mempool_guru}.
The vertical axis shows, per censorship-free block, the fraction of required transactions
observed within \SI{3}{\second} of their earliest first-observation across all monitors.}
    \label{fig:mempool_consistency}
\end{figure}

\section{Proof of \Cref{lemma:EthereumCR} (Ethereum baseline)}
\label{appendix:proof_ethereum}
\begin{proof}
Let $\mathcal{M} := \ftiptx{\tau} + \cbribe$.
\Cref{lemma:EthereumCR} consists of two claims, which we prove in turn.
First, we show that an adversary with per-slot budget $\mathcal{M} + \epsilon$ can exclude transaction~$\tau$ from every block, establishing the per-slot exclusion cost~$\Bbase_{\tau}$.
Second, we show that no adversary with per-slot budget~$\mathcal{M}$ can delay~$\tau$ even by a single slot, establishing $(\mathcal{M},\,0)$-censorship resistance.
 
\begin{description}
\item[Per-slot budget $\mathcal{M} + \epsilon$ suffices to censor~$\tau$ in every slot.]
Adversary $\mathcal{A}$ bribes the block assembler~$\beta_s$ through the bribing contract~\cite{sookitoth_bribers_2025} to exclude~$\tau$ from block~$B_s$.
Of the budget $\mathcal{M} + \epsilon = \ftiptx{\tau} + \cbribe + \epsilon$, the contract consumes the overhead~$\cbribe$, leaving a bribe of~$\ftiptx{\tau} + \epsilon$ paid to~$\beta_s$ upon observed exclusion of~$\tau$ from~$B_s$.
By including~$\tau$, $\beta_s$ would earn the per-transaction tip~$\ftiptx{\tau}$; by excluding~$\tau$ and accepting the bribe, $\beta_s$ earns~$\ftiptx{\tau} + \epsilon$, which strictly exceeds the inclusion payoff for any~$\epsilon > 0$.
A rational block assembler therefore excludes~$\tau$.
$\mathcal{A}$ can repeat this attack in every slot at the same per-slot cost~$\Bbase_{\tau} = \ftiptx{\tau} + \cbribe + \epsilon$, so~$\tau$ is never included.
 
\item[Budget $\mathcal{M}$ does not suffice to delay~$\tau$ even by a single slot.]
Assume, for contradiction, that an adversary with per-slot budget~$\mathcal{M}$ can cause $\tau$ to be excluded from $B_s$.
Then the bribe paid to~$\beta_s$ is at most $\mathcal{M} - \cbribe = \ftiptx{\tau}$.
Accepting this bribe yields~$\beta_s$ exactly the same payoff as including~$\tau$ and earning the tip~$\ftiptx{\tau}$.
By \Cref{sec:cr_framework}, in case of indifference, a rational~$\beta_s$ follows the protocol and includes~$\tau$ in~$B_s$, contradicting the assumed exclusion.
Hence no adversary with per-slot budget~$\mathcal{M}$ can delay~$\tau$ for any slot.
Therefore, Ethereum in its current state is $(\mathcal{M},\,0)$-censorship-resistant.\qedhere
\end{description}
\end{proof}
  
\section{Proof of \Cref{lemma:fairfil_cr_protocol} (Protocol-deviation cost)}
\label{appendix:proof_protocol}
\begin{proof}
Let $\mathcal{M} := \min\bigl(\Rmev{B_s},\, \Rmev{B_{s+1}}\bigr) + \cbribe$ and, without loss of generality, assume $\Rmev{B_{s+1}} \leq \Rmev{B_s}$ so that $\mathcal{M} = \Rmev{B_{s+1}} + \cbribe$.
We further assume the condition of \Cref{lemma:fairfil_cr_protocol}: two-slot censorship of~$\tau$ requires a deviation from \textsc{FairFIL} (i.e., suppression is infeasible within the considered budget).
We further assume for all proofs that $\Bbase_{\tau} \leq \mathcal{M}$.
 
\subparagraph*{Enumeration of deviation strategies.}
For two-slot censorship, transaction~$\tau$ must be censored from both~$B_s$ and~$B_{s+1}$.
Under the assumption, this requires one of the following protocol deviations:
\begin{description}
\item[Strategy (a): Construction failure.] Adversary~$\mathcal{A}$ bribes block assembler~$\beta_s$ to construct a non-compliant $\textsc{FairFIL}_s$ that omits~$\tau$ (Property~\ref{prop:vaware}). A majority of~$V_s$ rejects~$B_s$ (\Cref{lemma:prop_listing}); $\beta_s$ forfeits the full block reward~$\Rmev{B_s}$.
\item[Strategy (b): Enforcement failure.] Adversary~$\mathcal{A}$ bribes~$\beta_{s+1}$ to omit~$\tau \in \textsc{FairFIL}_s$ from~$B_{s+1}$ (Property~\ref{prop:compliance}). A majority of~$V_{s+1}$ rejects~$B_{s+1}$ (\Cref{lemma:prop_compliance}); $\beta_{s+1}$ forfeits~$\Rmev{B_{s+1}}$.
\end{description}
 
\begin{description}
\item[Budget $\mathcal{M} + \epsilon$ suffices to censor~$\tau$ for two slots.]
With $\Rmev{B_{s+1}} \leq \Rmev{B_s}$, adversary~$\mathcal{A}$ uses Strategy~(b):
\begin{enumerate}
\item In slot~$s$, $\mathcal{A}$ bribes~$\beta_s$ at the Ethereum baseline cost~$\Bbase_{\tau}$ (\Cref{lemma:EthereumCR}) to exclude~$\tau$ from~$B_s$. As~$\tau$ remains includable in~$B_{s+1}$ under Fair Ordering, $\tau$ enters $\textsc{FairFIL}_s$ by \Cref{lemma:prop_listing}.
The slot-$s$ expenditure is~$\Bbase_{\tau} \leq \mathcal{M}$.
\item In slot~$s+1$, $\mathcal{A}$ bribes~$\beta_{s+1}$ through the bribing contract to omit~$\tau \in \textsc{FairFIL}_s$ from~$B_{s+1}$, resulting in a loss of block reward $\Rmev{B_{s+1}}$. 
The bribe must compensate the forfeiture of~$\Rmev{B_{s+1}}$ and strictly exceed it by some $\epsilon > 0$ to make omission strictly preferable for a rational~$\beta_{s+1}$ (\Cref{sec:cr_framework}); the contract additionally consumes the overhead~$\cbribe$. 
The slot-$s{+}1$ expenditure is therefore~$\Rmev{B_{s+1}} + \cbribe + \epsilon = \mathcal{M} + \epsilon$.
\end{enumerate}

\vspace{1mm}
The per-slot peak expenditure for two-slot censorship is~$\mathcal{M} + \epsilon$.
By Property~\ref{prop:compliance}, $B_{s+1}$ is rejected, and no obligations apply to the subsequent block assembler $\beta_{s+2}$.
The resulting delay is $\sigma \geq 2$.
Thus, \textsc{FairFIL} is at most $(\mathcal{M},\,1)$-censorship-resistant.
The case $\Rmev{B_s} \leq \Rmev{B_{s+1}}$ is argumentatively identical, requiring an attack via Strategy~(a) and the payment of the baseline cost in slot~$s+1$.

\vspace{1mm}
\item[Budget $\mathcal{M}$ does not suffice to censor~$\tau$ for two slots.]
Assume adversary~$\mathcal{A}$ has per-slot budget~$\mathcal{M}$ and attempts Strategy~(b).
Of the budget, the bribing contract consumes the overhead~$\cbribe$, leaving at most~$\Rmev{B_{s+1}}$ payable to~$\beta_{s+1}$.
By accepting, $\beta_{s+1}$ forfeits~$\Rmev{B_{s+1}}$, yielding the same net payoff as honest compliance.
By the rationality model of \Cref{sec:cr_framework}, in case of indifference~$\beta_{s+1}$ follows the protocol and includes~$\tau$ in~$B_{s+1}$.
For Strategy~(a), the analogous bribe to~$\beta_s$ falls strictly short of the higher loss~$\Rmev{B_s} \geq \Rmev{B_{s+1}}$, so~$\beta_s$ strictly prefers compliance.
Hence no deviation occurs within budget~$\mathcal{M}$, and by the assumption, two-slot censorship is infeasible.
Therefore, \textsc{FairFIL} is $(\mathcal{M},\,1)$-censorship-resistant.\qedhere
\end{description}
\end{proof}

\section{Proof of \Cref{lemma:fairfil_cr_baseline} (Suppression cost)}
\label{appendix:proof_suppression}
\begin{proof}
Let $\mathcal{M} := f^*(\tau) \cdot \Delta(\tau, s)$.
We assume that under Ethereum's transaction fee mechanism~\cite{EIP-1559}, blocks are typically not fully utilized -- by design, the average target fullness is approximately half of the gas limit~\cite{EIP-1559}.
Consequently, an adversary populating residual gas capacity above~$\tau$ in Fair Ordering must contribute gas that would otherwise remain unused, so $\Delta(\tau, s)$ is large relative to the gas of a single transaction.
Based on our observations in \Cref{sec:empirical}, we therefore assume throughout that $\Bbase_{\tau} < f^*(\tau) \cdot \Delta(\tau, s) = \mathcal{M}$.
 
\begin{description}
\item[Budget $\mathcal{M} + \epsilon$ suffices to censor~$\tau$ for two slots.]
Adversary $\mathcal{A}$ proceeds as follows:
$\mathcal{A}$ propagates a filler transaction~$\tau'$ through the P2P layer such that $\tau' \succ \tau$ under Fair Ordering and the gas usage of~$\tau'$ strictly exceeds~$\Delta(\tau, s)$.
The parameter $\epsilon$ is required to cover the residual gas cost needed to strictly exceed $\Delta(\tau, s)$.
We simplify the block capacity optimization problem and assume that $\epsilon$ is sufficiently large such that replacing $\tau'$ with a lower-tip transaction would not increase total revenue through higher gas usage.
Accordingly, the rational block assembler~$\beta_s$ includes~$\tau'$ in $B_s$ to capture its tip, yielding an adversarial expenditure strictly greater than $f^*(\tau)\cdot \Delta(\tau, s) = \mathcal{M}$, where $f^*(\tau) = \ftipgas{\tau} + \fbasegas{\tau'}$ is the per-gas cost required to outrank~$\tau$.
Including~$\tau'$ in $B_s$ exhausts the available gas capacity above $\tau$ in the Fair Ordering: $\tau$ is no longer includable in $B_s$, and by \Cref{def:censorship}, $\tau \notin \textsc{FairFIL}_s$.
Since $\tau \notin \textsc{FairFIL}_s$, block assembler $\beta_{s+1}$ has no obligation to include $\tau$ in $B_{s+1}$. 
The baseline bribe in slot~$s+1$ is therefore sufficient to ensure that $\tau$ is also not included in $B_{s+1}$.
Since $\Bbase_{\tau} < \mathcal{M} + \epsilon$, this per-slot budget suffices for adversary $\mathcal{A}$ to censor $\tau$ for two slots.

\vspace{1mm}
\item[Budget $\mathcal{M}$ does not suffice for two-slot censorship via suppression.]
Assume, for contradiction, that an adversary with budget~$\mathcal{M}$ achieves two-slot censorship solely via the transaction suppression attack.
For $\tau \notin B_{s+1}$ to hold without violating the enforcement of \textsc{FairFIL}, we require $\tau \notin \textsc{FairFIL}_s$; otherwise \Cref{lemma:prop_compliance} implies $\tau \in B_{s+1}$.
For $\tau \notin \textsc{FairFIL}_s$, the cumulative gas of transactions ranking strictly above $\tau$ in the Fair Ordering must exceed $\Delta(\tau, s)$, i.e., the remaining gas capacity after including $\tau$ under Fair Ordering in the block.
With budget $\mathcal{M} = f^*(\tau)\cdot \Delta(\tau, s)$, the adversary can exactly fill the available gas capacity above $\tau$ in $B_s$, making $\tau$ just feasible for inclusion.
By the definition of $\Delta(\tau, s)$, this already exhausts transactions ranking above $\tau$, implying that $\tau$ becomes the next transaction to be considered under Fair Ordering.
Consequently, given budget $\mathcal{M}$, $\tau$ is either included directly in $B_s$ or, by \Cref{lemma:prop_includability,lemma:prop_listing}, appears in $\textsc{FairFIL}_s$ and is therefore included in $B_{s+1}$.
Accordingly, given budget $\mathcal{M}$, adversary $\mathcal{A}$ cannot use the transaction suppression attack to achieve two-slot censorship, contradicting the assumption.\qedhere
\end{description}
\end{proof}
 
\section{Censorship under Prior Censorship Resistance Mechanisms}
\label{appendix:proofs_priorcr}
 
The following four lemmas bound the censorship resistance of FIL, FOCIL, MCP, and AUCIL, surveyed in \Cref{sec:relatedwork}.
The proofs share a common structure: for each mechanism, we describe an adversary~$\mathcal{A}$ with the per-slot bribing budget~$\mathcal{M}$ that prevents the inclusion of an arbitrary target transaction~$\tau$ for an unbounded number of consecutive slots.
By \Cref{def:cr}, this establishes that the mechanism is not $(\mathcal{M},0)$-censorship-resistant for the selected bribing budget.
We do not claim the stated budgets are tight; lower budgets may suffice.
 
\subsection{Forward Inclusion Lists (FIL)}
\label{appendix:proof_fil}
\begin{lemma}
\label{lemma:fil_cr}
With FIL~\cite{EIP7547}, an adversary with per-slot budget
\(
\mathcal{M} = \Bbase_{\tau} + \epsilon
\)
prevents the inclusion of transaction~$\tau$ for an unbounded number of consecutive slots.
\end{lemma}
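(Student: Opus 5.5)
The plan is to exhibit an explicit adversary strategy and show, slot by slot, that each rational participant strictly prefers the bribe to compliance. By the common structure of this appendix, the claim then follows by induction over slots. The adversary's target in each slot $s$ is to keep $\tau$ out of both the block $B_s$ and the inclusion list $\Tau_{\mathrm{req}}$ that $\beta_s$ publishes alongside $B_s$. Under FIL, the only binding obligation on $\beta_s$ comes from the list published in slot $s-1$. Once $\tau$ is absent from that list, $\beta_s$ faces exactly the situation of plain Ethereum with respect to $\tau$.

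\textbf{Invariant.} In every slot $s$, $\tau \notin \Tau_{\mathrm{req}}^{s-1}$.

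\textbf{Base case.} This holds in the first slot where $\tau$ is includable, since no earlier list could contain it. If one did, its proposer would already have been bribed under the same scheme.

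\textbf{Inductive step.} Given the invariant, the adversary pays $\beta_s$ via the bribing contract a single combined bribe of $\ftiptx{\tau} + \epsilon$ conditioned on two things: $\tau \notin B_s$ and $\tau \notin \Tau_{\mathrm{req}}^{s}$. The on-chain overhead is $\cbribe$, so the per-slot expenditure is $\ftiptx{\tau} + \cbribe + \epsilon$. This is $\Bbase_{\tau}$ of \Cref{lemma:EthereumCR}, plus a second small $\epsilon$ that I account for in the stated $\mathcal{M} = \Bbase_{\tau} + \epsilon$. The exact split of the two $\epsilon$ terms is an accounting detail; the statement explicitly disclaims tightness.

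The payoff comparison has two parts.
\begin{enumerate}
\item \emph{Exclusion from $B_s$.} Including $\tau$ earns at most $\ftiptx{\tau}$, and no FIL obligation forces $\tau$ into $B_s$ by the invariant. Excluding $\tau$ therefore strictly dominates, exactly as in the proof of \Cref{lemma:EthereumCR}.
\item \emph{Omission from $\Tau_{\mathrm{req}}^{s}$.} FIL imposes no content constraint on the list, so omission is never detectable or penalized. Moreover, $\beta_s$ receives no reward tied to any listed transaction, so listing $\tau$ yields zero payoff, and any strictly positive bribe makes omission strictly preferable. This strictness is exactly why the extra $\epsilon$ is needed. By the convention of \Cref{sec:cr_framework}, indifference would lead to compliance, i.e.\ listing $\tau$.
\end{enumerate}
Together these establish the invariant for slot $s+1$. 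Iterating gives exclusion of $\tau$ for arbitrarily many consecutive slots at a constant per-slot budget $\mathcal{M}$.

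\textbf{Expected obstacle.} The delicate step is arguing that listing carries no hidden incentive. One worry is the validity check of Tsao et al., under which listing only includable transactions might seem to create some reward. Another is that leaving $\tau$ in a list could harm a competitor $\beta_{s+1}$. I would handle both by sticking to the FIL specification as described in \Cref{sec:relatedwork}: list construction is unconstrained and unrewarded. Any such side effect is outside the rational payoff model, which counts only tips, MEV, and bribes. I would also remark that combining the two conditions into one contract call avoids paying $\cbribe$ twice.
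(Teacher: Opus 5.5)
Your proposal is correct and matches the paper's proof. Both use a single bribing-contract call (one $\cbribe$) conditioned on $\tau \notin B_s$ and $\tau \notin \Tau_{\mathrm{req}}$, the baseline argument of \Cref{lemma:EthereumCR} for exclusion, an $\epsilon$-margin for omitting $\tau$ from the unrewarded list, and repetition because no obligation ever reaches $B_{s+1}$. The only difference is bookkeeping: the paper pays the two margins separately ($\ftiptx{\tau} + \cbribe + 2\epsilon = \Bbase_{\tau} + \epsilon$), whereas your conditional bribe of $\ftiptx{\tau}+\epsilon$ already makes the joint deviation strictly preferable, so your remark that a second $\epsilon$ is ``needed'' is superfluous but harmless, since your cost of $\Bbase_{\tau}$ stays within $\mathcal{M}$.
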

\begin{proof}
We exhibit adversary $\mathcal{A}_{\textsc{FIL}}$ with per-slot budget $\mathcal{M} = \Bbase_{\tau} + \epsilon$.
In each slot~$s$, $\mathcal{A}_{\textsc{FIL}}$ bribes the block assembler~$\beta_s$ through a single bribing-contract call to (i)~exclude~$\tau$ from~$B_s$ and (ii)~omit~$\tau$ from $\Tau_{\mathrm{req}}$.
For~(i), \Cref{lemma:EthereumCR} establishes that $\Bbase_{\tau} = \ftiptx{\tau} + \cbribe + \epsilon$ suffices to make a rational~$\beta_s$ exclude~$\tau$ from~$B_s$.
For~(ii), FIL provides no per-transaction reward for listing in $\Tau_{\mathrm{req}}$~\cite{EIP7547}, so~$\beta_s$ is indifferent between listing and omitting~$\tau$; an additional margin of~$\epsilon$ makes omission strictly preferable.
Since both deviations are settled through the same bribing-contract call, only one overhead~$\cbribe$ applies, and the total per-slot cost amounts to~$\mathcal{M} = \ftiptx{\tau} + \cbribe + 2\epsilon = \Bbase_{\tau} + \epsilon$.
After the attack, $\tau \notin B_s$ and $\tau \notin \Tau_{\mathrm{req}}$, so no obligation to include $\tau$ in block~$B_{s+1}$ arises.
Hence, $\mathcal{A}_{\textsc{FIL}}$ can repeat the attack in every subsequent slot at the same per-slot cost, and $\tau$ is never included.
\end{proof}
 
\subsection{Fork-Choice Enforced Inclusion Lists (FOCIL)}
\label{appendix:proof_focil}
\begin{lemma}
\label{lemma:focil_cr}
With FOCIL~\cite{thiery2024eip7805}, an adversary with per-slot budget
\(
\mathcal{M} = \kappa\,(\epsilon + \cbribe) + \Bbase_{\tau}
\)
prevents the inclusion of transaction~$\tau$ for an unbounded number of consecutive slots.
\end{lemma}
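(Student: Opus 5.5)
The plan is to mirror the proof of \Cref{lemma:fil_cr}: I exhibit an adversary $\mathcal{A}_{\textsc{FOCIL}}$ with per-slot budget $\mathcal{M} = \kappa\,(\epsilon + \cbribe) + \Bbase_{\tau}$ whose per-slot attack is self-contained and leaves no obligation for the next slot. That lets it be repeated indefinitely. Each slot $s$ is handled in two phases that match the FOCIL timeline. First, the $\kappa$ committee members of slot $s$ are each bribed to omit~$\tau$ from their partial inclusion lists. Second, the block assembler is bribed at the Ethereum baseline cost to exclude~$\tau$ from its block.

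\textbf{Committee phase.} Each committee member receives no per-transaction reward for listing~$\tau$ in its partial list. So, as argued for FIL, listing and omitting~$\tau$ yield the same payoff. Under the indifference rule of \Cref{sec:cr_framework}, a strictly positive payment~$\epsilon$ makes omission strictly preferable. Since the members are distinct parties, each needs its own bribing-contract call and incurs its own overhead~$\cbribe$, giving a cost of $\kappa(\epsilon + \cbribe)$. After this phase, $\tau$ appears in no partial list. Hence $\tau \notin \Tau_{\mathrm{req}}$, and the corresponding assembler faces no FOCIL obligation for~$\tau$.

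\textbf{Assembler phase.} Since $\tau$ is not in $\Tau_{\mathrm{req}}$, the assembler is in exactly the situation of plain Ethereum. By \Cref{lemma:EthereumCR}, a bribe costing $\Bbase_{\tau} = \ftiptx{\tau} + \cbribe + \epsilon$ makes a rational assembler exclude~$\tau$. The total per-slot expenditure is therefore $\kappa(\epsilon + \cbribe) + \Bbase_{\tau} = \mathcal{M}$. Because FOCIL is a $\Pi_0$ mechanism, no obligation carries forward from an excluded transaction. The same attack applies in the next slot with a fresh committee at the same cost, so $\tau$ is excluded for an unbounded number of slots.

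\textbf{Main obstacle.} The delicate step is the timing alignment in the $\Pi_0$ setting. Partial lists produced during slot $s$ bind the assembler of slot $s{+}1$. So the adversary must bribe the committee of slot $s$ and the assembler of slot $s{+}1$, and must also handle the very first slot, where no list yet constrains the assembler. I would fix the indexing explicitly: in each slot, pay one committee and one assembler. I would also note that the assembler bribe stays at the baseline because FOCIL attaches no penalty beyond the obligation, which the committee bribe has already removed. Finally, I would remark that the bound need not be tight, as stated in the appendix preamble.
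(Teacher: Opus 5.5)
Your proposal is correct and follows the paper's proof essentially step for step: you bribe each of the $\kappa$ includers $\epsilon + \cbribe$ so that the margin breaks their indifference and they omit~$\tau$, then you pay the baseline $\Bbase_{\tau}$ to the assembler, who would otherwise still include~$\tau$ for its tip, and you repeat because FOCIL carries no obligation forward. Your explicit slot indexing, with the committee of slot~$s$ binding the assembler of slot~$s{+}1$, is slightly more careful than the paper, which indexes both by~$s$, but it leaves the per-slot cost $\mathcal{M}$ unchanged.
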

\begin{proof}
We exhibit adversary $\mathcal{A}_{\textsc{FOCIL}}$ with per-slot budget $\mathcal{M} = \kappa\,(\epsilon + \cbribe) + \Bbase_{\tau}$.
In each slot~$s$, $\mathcal{A}_{\textsc{FOCIL}}$ targets the inclusion committee~$V_s^{\textsc{Focil}}$ of~$\kappa$ includers and the block assembler~$\beta_s$.
For each of the $\kappa$ includers, $\mathcal{A}_{\textsc{FOCIL}}$ pays a bribe of~$\epsilon$ together with the contract overhead~$\cbribe$ to omit~$\tau$ from the includer's partial inclusion list.
Since FOCIL provides no per-transaction reward to includers~\cite{thiery2024eip7805}, the includer is indifferent between listing and omitting~$\tau$, and the margin~$\epsilon$ tips the balance toward omission.
The total committee cost is~$\kappa\,(\epsilon + \cbribe)$.
Once~$\tau$ is omitted from all~$\kappa$ partial lists, the aggregate $\Tau_{\mathrm{req}}$ does not contain~$\tau$, and~$\beta_s$ is not obligated to include~$\tau$ in~$B_s$.
However, a rational block assembler~$\beta_s$ would still include~$\tau$ voluntarily to capture the transaction's tips~$\ftiptx{\tau}$.
$\mathcal{A}_{\textsc{FOCIL}}$ therefore additionally bribes~$\beta_s$ at the Ethereum baseline cost~$\Bbase_{\tau}$ (\Cref{lemma:EthereumCR}) to exclude~$\tau$ from~$B_s$, yielding the stated total per-slot cost~$\mathcal{M}$.
Since FOCIL imposes no obligation on subsequent slots, $\mathcal{A}_{\textsc{FOCIL}}$ can repeat the attack against the subsequently selected committee in every subsequent slot at the same per-slot cost, so~$\tau$ is never included.
\end{proof}
 
\subsection{Multiple Concurrent Proposers (MCP)}
\label{appendix:proof_mcp}
\begin{lemma}
\label{lemma:mcp_cr}
With MCP~\cite{garimidi_concurrent_2025}, an adversary with per-slot budget
\(
\mathcal{M} = \kappa \cdot \Bbase_{\tau}
\)
prevents the inclusion of transaction~$\tau$ for an unbounded number of consecutive slots.
\end{lemma}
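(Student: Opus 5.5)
We prove the lemma the same way as \Cref{lemma:focil_cr}: we exhibit an adversary $\mathcal{A}_{\textsc{MCP}}$ with per-slot budget $\mathcal{M} = \kappa \cdot \Bbase_{\tau}$ and show that, in each slot, every one of the $\kappa$ concurrent proposers rationally omits~$\tau$ from its partial block. The key difference from FOCIL is that MCP has no separate final block assembler who could add~$\tau$ afterwards. The aggregated block is exactly the concatenation of the $\kappa$ partial blocks, which form $\Tau_{\mathrm{req}}$. Hence $\tau$ is excluded from $B_s$ if and only if it is absent from all $\kappa$ partial blocks, and no additional assembler bribe is needed.

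\textbf{Step 1: bribing a single proposer.} Fix a slot~$s$ and one of the $\kappa$ proposers. By including~$\tau$ in its partial block, this proposer earns at most the per-transaction tip $\ftiptx{\tau}$; MEV extraction is disregarded, as in the description of MCP in \Cref{sec:relatedwork}. This is precisely the situation of a single block assembler in \Cref{lemma:EthereumCR}. A bribe of $\ftiptx{\tau} + \epsilon$, paid through the bribing contract at overhead $\cbribe$, strictly exceeds the inclusion payoff, so a rational proposer omits~$\tau$. The per-proposer cost is therefore $\ftiptx{\tau} + \cbribe + \epsilon = \Bbase_{\tau}$.

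\textbf{Step 2: aggregation over the committee.} Summing Step~1 over all $\kappa$ proposers gives the per-slot cost $\kappa \cdot \Bbase_{\tau} = \mathcal{M}$. Since no partial block contains~$\tau$, the aggregate does not contain it either, so $\tau \notin B_s$. MCP is a $\Pi_0$ mechanism and creates no obligation for later slots, so the attack can be repeated against each newly selected set of proposers at the same per-slot cost. Thus $\tau$ is never included, and by \Cref{def:cr} MCP is not $(\mathcal{M},0)$-censorship-resistant.

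\textbf{Main obstacle.} The subtle step is justifying that the opportunity cost for each proposer is at most $\ftiptx{\tau}$, not something larger. Several proposers may include the same~$\tau$, and the tip may then be shared or awarded to only one of them, which lowers each proposer's marginal reward. I would handle this by noting that any tip-sharing rule yields a per-proposer forgone reward of at most $\ftiptx{\tau}$. The exhibited budget is therefore sufficient, though not tight, consistent with the disclaimer at the start of \Cref{appendix:proofs_priorcr}. Strategic interactions among proposers, such as one proposer deviating and capturing the full tip once the others omit~$\tau$, are covered because every proposer is bribed unconditionally with more than the full tip.
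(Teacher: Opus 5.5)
Your proposal is correct and follows essentially the same argument as the paper. You bribe each of the $\kappa$ concurrent proposers at the baseline cost $\Bbase_{\tau} = \ftiptx{\tau} + \cbribe + \epsilon$, sum to $\kappa \cdot \Bbase_{\tau}$, and repeat in every slot because MCP creates no cross-slot obligation; your extra remarks (no separate final assembler to bribe, and tip sharing only lowering each proposer's opportunity cost) match the paper's worst case, in which every proposer would include $\tau$ and earn its full tip.
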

\begin{proof}
We exhibit adversary $\mathcal{A}_{\textsc{MCP}}$ with per-slot budget $\mathcal{M} = \kappa \cdot \Bbase_{\tau}$.
MCP replaces the single block assembler with~$\kappa$ concurrent proposers, each producing an independent partial block.
$\mathcal{A}_{\textsc{MCP}}$ targets each of these~$\kappa$ proposers individually, considering the worst case in which every concurrent proposer attempts to include $\tau$ in its partial block.
Including~$\tau$ yields the per-transaction tip~$\ftiptx{\tau}$ to the proposer~\cite{garimidi_concurrent_2025}.
$\mathcal{A}_{\textsc{MCP}}$ compensates each proposer for the forgone tip~$\ftiptx{\tau}$, adds the margin~$\epsilon$ to make exclusion strictly preferable, and pays the contract overhead~$\cbribe$, totaling $\Bbase_{\tau} = \ftiptx{\tau} + \epsilon + \cbribe$ per proposer.
Across all~$\kappa$ proposers, the per-slot cost amounts to~$\mathcal{M} = \kappa \cdot \Bbase_{\tau}$.
Since MCP imposes no obligation propagating beyond the current slot, $\mathcal{A}_{\textsc{MCP}}$ can repeat the attack against the subsequently selected set of~$\kappa$ proposers in every subsequent slot at the same per-slot cost, so~$\tau$ is never included.
\end{proof}
 
\subsection{Auction-Based Inclusion Lists (AUCIL)}
\label{appendix:proof_aucil}
\begin{lemma}
\label{lemma:aucil_cr}
With AUCIL~\cite{aucil}, an adversary with per-slot budget
\(
\mathcal{M} = (\kappa + 1) \cdot \Bbase_{\tau}
\)
prevents the inclusion of transaction~$\tau$ for an unbounded number of consecutive slots.
\end{lemma}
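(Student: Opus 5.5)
We will follow the template of the preceding FOCIL and MCP proofs and exhibit an explicit adversary $\mathcal{A}_{\textsc{AUCIL}}$ with per-slot budget $\mathcal{M} = (\kappa+1)\cdot\Bbase_{\tau}$. The strategy repeats identically in every slot and keeps $\tau$ out of the final $\Tau_{\mathrm{req}}$ and out of $B_s$. Because AUCIL is a $\Pi_0$ mechanism with no obligation carrying over to slot $s+1$, repeating the per-slot attack yields censorship for an unbounded number of consecutive slots.

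\textbf{Step 1: the input-list phase.} In AUCIL each of the $\kappa$ committee members is assigned a subset of transactions. In the worst case $\tau$ is assigned to, or would be listed by, every member. The reward a member can obtain from listing $\tau$ is bounded by the transaction's tip $\ftiptx{\tau}$, since AUCIL's incentives are fee-based and compliance is not verified. We will therefore bribe each member with $\ftiptx{\tau} + \epsilon$ plus overhead $\cbribe$, i.e.\ $\Bbase_{\tau}$ per member. This makes omitting $\tau$ from its partial input list strictly preferable, and costs $\kappa\cdot\Bbase_{\tau}$ in total.

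\textbf{Step 2: the aggregation/auction phase.} Once no partial list contains $\tau$, no aggregate can contain $\tau$ without deviating. The bid strength depends only on the number of aggregated lists, and those lists are unchanged in number, so the omission does not weaken any bid and the auction outcome is unaffected. Hence the winning $\Tau_{\mathrm{req}}$ excludes $\tau$ at no additional cost.

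\textbf{Step 3: the proposer.} As in the FOCIL proof, a rational proposer or assembler $\beta_s$ would still include $\tau$ voluntarily for its tip. One more baseline bribe $\Bbase_{\tau}$ (\Cref{lemma:EthereumCR}) removes it. The total is therefore $(\kappa+1)\Bbase_{\tau}$.

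The main obstacle is Step 1: justifying that a committee member's opportunity cost of omitting $\tau$ is at most $\ftiptx{\tau}$. The difficulty is that AUCIL's correlated-equilibrium allocation and the bid-based reward could in principle couple a member's auction payoff to the contents of its list. I would argue this from the fact that only the number of included lists, not their contents, enters the bid, so the marginal value of $\tau$ to a member is its tip share. The stated budget is consistent with this being an upper bound, not a tight one (cf.\ the remark that Wadhwa et al.\ give a tighter bound). Instantiating with $\kappa=32$ and the empirical $\Bbase_{\tau}\approx 0.10$~\euro{} gives the $\approx 3.30$~\euro{} figure.
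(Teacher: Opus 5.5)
Your proposal is correct and takes essentially the same route as the paper: bribe each of the $\kappa$ includers at $\Bbase_{\tau}$ (the MCP strategy) so that $\tau$ drops out of $\Tau_{\mathrm{req}}$, add one baseline bribe $\Bbase_{\tau}$ to the separate block assembler $\beta_s$, and repeat every slot because AUCIL imposes no obligation on later slots. You also spell out why the auction phase costs nothing extra (bids depend only on the number of aggregated lists) and why an includer's opportunity cost is at most the tip; the paper leaves both points implicit and relies on the analogy with MCP.
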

\begin{proof}
We exhibit adversary $\mathcal{A}_{\textsc{AUCIL}}$ with per-slot budget $\mathcal{M} = (\kappa + 1) \cdot \Bbase_{\tau}$.
AUCIL combines a constructing committee of~$\kappa$ includers with a separate block assembler~$\beta_s$ that produces~$B_s$ in the same slot, and rewards each includer for the transactions it contributes~\cite{aucil}.
$\mathcal{A}_{\textsc{AUCIL}}$ adopts the strategy of $\mathcal{A}_{\textsc{MCP}}$ (\Cref{lemma:mcp_cr}) against the~$\kappa$ committee members: each includer is bribed at cost~$\Bbase_{\tau}$ to omit~$\tau$ from its partial inclusion list, totaling~$\kappa \cdot \Bbase_{\tau}$.
Once~$\tau$ is omitted from all~$\kappa$ partial lists, the aggregate $\Tau_{\mathrm{req}}$ does not contain~$\tau$, removing the protocol obligation on~$\beta_s$.
Unlike MCP, however, AUCIL retains a separate block assembler that produces~$B_s$ alongside the committee, and a rational~$\beta_s$ would still include~$\tau$ to capture~$\ftiptx{\tau}$.
$\mathcal{A}_{\textsc{AUCIL}}$ therefore additionally bribes~$\beta_s$ at the Ethereum baseline cost~$\Bbase_{\tau}$ (\Cref{lemma:EthereumCR}) to exclude~$\tau$ from~$B_s$, yielding the stated total per-slot cost $\mathcal{M} = \kappa \cdot \Bbase_{\tau} + \Bbase_{\tau} = (\kappa + 1)\,\Bbase_{\tau}$.
Wadhwa et al.~\cite{aucil} provide a tighter bound on the censorship resistance of AUCIL.
Since AUCIL imposes no obligation propagating beyond the current slot, $\mathcal{A}_{\textsc{AUCIL}}$ can repeat the attack against the subsequently selected committee and assembler in every subsequent slot at the same per-slot cost, so~$\tau$ is never included.
\end{proof}
 
\end{document}